\newtheorem{proposition}{Proposition}%
\newtheorem{lemma}{Lemma} 
\newtheorem{remark}{Remark}%
\theoremstyle{thmstylethree}%
\newcommand{\myrb}[1]{#1}
\begin{document}
	
	\title[The Life Care Annuity: enhancing product features and enriching perspectives]{The Life Care Annuity: enhancing product features and refining pricing methods} 
	
	
	\author*[1]{\fnm{Giovanna} \sur{Apicella}}\email{giovanna.apicella@uniud.it}
		\equalcont{These authors contributed equally to this work.}
	
	\author[1]{\fnm{Marcellino} \sur{Gaudenzi}}\email{marcellino.gaudenzi@uniud.it}
		\equalcont{These authors contributed equally to this work.}
		
		\author[1]{\fnm{Andrea} \sur{Molent}}\email{andrea.molent@uniud.it}
		\equalcont{These authors contributed equally to this work.}
	
	\affil*[1]{\orgdiv{Department of Economics and Statistics}, \orgname{University of Udine}, \orgaddress{\street{via Tomadini 30/A}, \city{Udine}, \postcode{33100},  \country{Italy}}}

	
	\abstract{The state-of-the-art proposes Life Care Annuities, that have been recently designed as variable
		annuity contracts with Long-Term Care payouts and Guaranteed Lifelong Withdrawal Benefits. In this paper, we propose more general features for these insurance products and refine their pricing methods. We name our proposed product ``GLWB-LTC''. In particular, as to the product features, we allow dynamic withdrawal strategies, including the surrender option. Furthermore, we consider stochastic interest rates, described by a Cox-Ingersoll-Ross process. As to the numerical methods, we solve the stochastic control problem involved by the selection of the optimal withdrawal strategy through a robust tree method, which outperforms the Monte Carlo approach. We name this method ``Tree-LTC'', and we use it to estimate the fair price of the product, as some relevant parameters vary, such as, for instance, the entry age of the policyholder. Furthermore, our numerical results show how the optimal withdrawal strategy varies over time with the health status of the policyholder. Our findings stress the important advantage of flexible withdrawal strategies in relation to insurance policies offering protection from health risks. Indeed, the policyholder is given more choice about how much to save for protection from the possible disability states at future times.
	}

\keywords{life care annuity, GLWB pricing, dynamic withdrawal strategy, tree method, stochastic interest rate}
	
	
	
	\maketitle
	\section{Introduction}\label{sec1}
	As stressed by the World Health Organization\footnote{https://www.who.int/news-room/fact-sheets/detail/ageing-and-health}, every country in the world is experiencing the phenomenon of population ageing, whose drivers are lower fertility and higher survival prospects. Indeed, older people represent a growing share of the population. For instance, it is estimated that the proportion of the world's population over 60 years will nearly double from 12\% to 22\% by 2050. 

While a longer life gives rise to several opportunities for older people and their families, the quality of the added years of life highly depends on health. 

Currently, it is estimated that more than 250 million people aged 60 years and over have moderate to severe disability\footnote{https://www.un.org/development/desa/disabilities/disability-and-ageing.html}. The progressive population ageing may lead to more people experiencing age-related diseases/disorders and disability in their more advanced stages of life \citep{Petretto2022}. Furthermore, as emphazised in \cite{OECD2022}, Long-COVID or ``Post COVID-19 Condition (PCC)'' will likely make chronic diseases more prevalent in both younger and older people in the coming years.

Ageing and disability cannot be disentangled. Enabling older people to receive care and support in the face of declines in physical and mental capacity (e.g., granting access to long-term-care) is indeed one of the targeted initiatives related to healthy ageing\footnote{https://www.who.int/initiatives/decade-of-healthy-ageing}, being aligned with the 17 Sustainable Development Goals (SDGs) set in the 2030 Agenda for Sustainable Development\footnote{https://sdgs.un.org/goals}. 

As populations grow old and the demand for LTC services is expected to increase in the coming years and decades, governments seek to balance financial sustainability with the provision of effective social protection against the financial hardship that may be caused to individuals by LTC needs  \citep{Costa2020}. Indeed, institutionalized care may be very expensive on a daily basis and may be needed on extended time horizons, thus implying dramatic costs. Public social protection systems play a fundamental role in subsidising the total costs of LTC in a large majority of OECD and EU countries, even for the people with higher incomes. 

Private insurance can complement or supplement the public sector, e.g., by extending care options and filling the gaps in public coverage \citep{OECD2021}. Tipically, long-term care insurance policies are designed to support the payment for
assistance (at home or in an institution) for individuals
who experience difficulty accomplishing ``activities of daily living'' (ADLS) because of physical and/or cognitive impairments. 
Although the potential need for long-term care represents one of the greatest financial risks for most older people and their families, private long-term care insurance has a relatively small market penetration in OECD countries \citep{OECD2011} and worlwide, with significant welfare and public policy implications. 

The literature has identified several reasons why individuals may decide not to purchase private LTC insurance coverage, addressing, for instance, the main demand-side factors that may drive such a behaviour \citep{Eling2019}. Among these (e.g., high premiums loadings, information asymmetry), as argued by \cite{Brown2009}, the individuals' lack of a proper understanding of LTC insurance and of the LTC expenditure risk contribute to impose limits on the size of the private LTC market. 

Health shocks are very difficult to predict, in terms of both their severity and the time when they occur. Liquidity needs due to perceived health cost risks have economic effects. For instance, health cost risk is offered as a possible explanation to the low annuitization rate being consistently observed in the private insurance market, 
the so-called ``annuitization puzzle'' \citep{Werker2017}. Nevertheless, \cite{Xu2023} show that, when health shocks are considered, access to LTC insurance mitigates the reduction in the annuity demand induced by a higher level of risk aversion. 

As discussed in \cite{Gatzert2023}, the issue of the optimal decumulation of wealth during retirement is highly relevant. The most recent paradigms analyse products and strategies for the decumulation of wealth under the perspectives of both insurers and retirees to ensure that demand meets supply, thus accounting also for risk perception and behavioral aspects.

The private insurance sector has explored the combination of LTC with other insurance products, e.g., annuities, so that to bundle LTC with other risks as, e.g., in \cite{Webb2009}. \cite{Getzen1988} proposed  ``longlife insurance'' plans, combining deferred annuity benefits, health insurance and LTC. Such insurance coverage was designed to match protection from the risks of chronic illness with protection from the risks of higher longevity and thus to mitigate the adverse selection affecting both LTC insurance and annuities. More recently, \cite{Fuino2022} evaluated a life annuity product with an embedded care option potentially supporting the financial needs of dependent persons, by accounting for both the insurer's perspective and the policyholder's willingness to pay for the care option.  

\cite{Murtaugh2001},  investigated the empirical features of ``life care annuity'', namely the combination of life annuity with LTC disability coverage at retirement. This product has the potential to extend disability protection to a wider segment of the population and to mitigate adverse selection, thus reducing its purchase cost. \cite{Brown2013} provided an empirical examination of life care annuity based on the data from the Health and Retirement Study (HRS).

One of the most recent innovations discussed in the state-of-the-art is the Variable Life Care Annuity with Guaranteed Lifetime Withdrawal Benefits (LCA-GLWB), protecting downside risk, through guaranteed income streams, together with longevity and LTC cost risk \citep{Hsieh2018}. Specifically, under the general scheme of a GLWB variable annuity contract, the policyholder makes a single lump sum payment, that is invested in risky assets, such as a mutual fund. The amount of the lump sum payment typically represents the benefit base, or guarantee account balance. The policyholder is allowed to withdraw a given fraction of the benefit base each year until she remains alive. The GLWB thus combines longevity protection, exposure to equity markets and withdrawal flexibility. The valuation of these guarantees and the involved technical problems are discussed, for instance, in \cite{Bacinello2011,Steinorth2015,Molent2016,DeAngelis2022}. Compared to a traditional GLWB variable annuity contract, a variable LCA-GLWB contract provides also LTC payouts if the policyholder incurs in defined frailty state levels (e.g., impairments in ADLS). The evaluation of such insurance contract requires tackling three sources of uncertainty: the occurrence of ADLs impairments, prospective longevity (either in the healthy or disabled condition), and the performance of the financial market. 

In \cite{Hsieh2018}, withdrawals are possible only at contractually defined percentages of the benefit base. Furthermore, pricing relies on Monte Carlo valuation methods, such as the variance reduction techniques (specifically, control variates technique). In such a pricing framework, the interest rate is not stochastic. 

The key idea of our paper is to provide more general features for the variable annuity contract with LTC payouts and GLWB and to refine its pricing methods. We denote our proposed product ``GLWB-LTC''. The three characteristics that make GLWB-LTC depart from the LCA-GLWB product of \cite{Hsieh2018}, relative to the product specification and pricing method: (i) stochastic interest rate model, namely Cox-Ingersoll-Ross (CIR) as in \cite{Cox1985}, (ii) dynamic withdrawal strategy, as in \cite{Forsyth2014}, (iii) pricing based on a tree method, as in \cite{Appolloni2015}. 

A stochastic framework for the interest rate model allows a more accurate description of the future evolution of interest rates, over the long time horizon implied by the policy duration. Specifically, the underlying fund is supposed to evolve, under a risk neutral measure, as a geometric Brownian motion, as in the Black-Scholes (BS) model, but with stochastic drift given by the short interest rate. This latter, is supposed to follow a CIR process, thus we term this the BS-CIR model. 

A dynamic withdrawal strategy allows the policyholder to choose the amount to be withdrawn. Accordingly, the benefit base may be increased if the policyholder withdraws no funds in a given year (i.e., bonus or roll-up). Furthermore, the contract may terminate if the policyholder opts for complete surrender, namely she withdraws the whole residual amount in the investment account. As illustrated in \cite{Bacinello2009}, insurance products embedding a surrender option may be more attractive to the demand side, as policyholders may be less prone to perceive insurance securities as illiquid investments. This early exercise feature acquires even more relevance in light of the fact that mis-perceptions of health cost and mortality-related risks may further contribute to make long-term contracts such as annuities and LTC schemes poorly attractive for individuals in their pre-retirement ages; see, e.g., \cite{ODea2023}. 

From a numerical point of view, the presence of a surrender option implies tackling an American-style  option enabling the policyholder to exit the contract and be paid the surrender value. We solve the stochastic control problem involved by the evaluation of this option, through an improved version of the tree method pricing technique in \cite{Appolloni2015},  as it is proven to be fast and efficient for pricing American options in the BS-CIR model, without any numerical restriction on
its parameters. The employed method, which we term \textit{Tree-LTC}, can be applied also in the case of high volatility of interest rates and shows advantages over Monte Carlo methods.

Finally, we perform several numerical experiments. As a first step, we validate the Tree-LTC method, showing that it outperforms the traditional Monte Carlo simulation approach when pricing a traditional LCA-GLWB insurance product. As a next step, we focus on our proposed GLWB-LTC insurance product and carry out its evaluation. Our numerical results describes how the fair prices and the optimal withdrawal strategy vary with some features of the policyholder, such as her age and her health status, and other factors such as market conditions (as expressed by the volatility of the fund and of  the interest rate). Our novel evidence shows an important advantage of flexible withdrawal strategies, in relation to insurance policies offering protection from health risks.  Indeed, against a small increase in the fee, the policyholder is given more choice about how much to save for protection from the possible disability states at future times.

The paper is structured as follows. Section 2 introduces the product and the model specifications. Section 3 presents in detail the principles and methods adopted for the evaluation of the contract under examination. Section 4 discusses the numerical results. Finally, Section 5 draws the conclusions.

	\section{Product and model specifications}\label{sec:methods}	
	In this Section, we illustrate how the GLWB-LTC product specification is designed and we describe the underlying modelling framework.

	\subsection{Health state model}
	LCA policyholders are characterized by complex mortality patterns. Health and mortality risks play a substantial role within the actuarial modeling of health and life insurance policies, and require a proper assessment, according to the regulatory framework of the Solvency II Directive\footnote{\url{https://www.eiopa.europa.eu/browse/regulation-and-policy/solvency-ii_en}} \citep{Shao2015}.

\cite{Pitacco1995} illustrates how, in the framework of the mathematics of Markov and semi-Markov stochastic processes, it is possible to develop a general approach for the 
actuarial modelling of disability and related benefits, such as LTC annuities. Indeed, the evaluation of life insurance policies with long term benefits is usually based on probabilistic structures consistent with Markovian multi-state models, such as, for instance, in \cite{Haberman1998,Menzietti2012, Tabakova2021}. Interdisciplinary literature proposes a variety of statistical methods to estimate transition matrices of Markov chains from data, for instance \cite{Baione2014,Helms2015}. 

In our paper, we use the disability model proposed by the authors in \cite{Manton1993, Pritchard2006}. According to their model, disability is defined in terms of loss of   \textit{instrumental activities
	of daily living} (IADL, such as meal preparation, grocery shopping, getting
around outside, using the telephone), and loss of \textit{activities of daily living} (ADL, such as eating, getting in and out of bed, getting around
inside, dressing, bathing, getting to the bathroom
or using the toilet). In particular, such a model includes seven health states: healthy, impairment in only IADL, 1-2 impairments in ADLs, 3-4 impairments in ADLs, 5-6 impairments in ADLs, institutionalized and dead. 

Here are the main features of this model. Let $M_t\in\left\lbrace1,2,3,4,5,6,7 \right\rbrace $  be a random variable which represents the health state of the policyholder (hereinafter PH) at time $t$, being $x_0$ her age at inception. Now, for $0\leq s\leq t$, we term $P^{x_0}(s,t)$ the $7\times7$ transition probability matrix with entries $$ p^{x_0}_{i,j}(s,t) =\mathbb{P}\left( M_{t}=j|M_{s}=i\right). $$ Transition rates can be used to define the process: let $Q^{x_0}(t)$ be the $7\times 7$ matrix, given by 
\begin{equation*}
	\begin{aligned}
		& q^{x_0}_{i, j}=\lim _{\Delta t \rightarrow 0} \frac{p^{x_0}_{i, j}(t, t+\Delta t)}{ \Delta t}, i \neq j, \\
		& q^{x_0}_{i, i}=-\sum_{j\neq i} q^{x_0}_{i, j}, i=1, \ldots, 7 .
	\end{aligned}
\end{equation*}
The matrices $Q^{x_0}$ are assumed to be time-homogeneous during each year, that is, for each $n\in \mathbb{N}$, $Q^{x_0}(t)=Q^{x_0}(s)$ holds for all $t,s$ such that $n\leq s\leq t< n+1$. 
Then, the transition probability matrix $P^{x_0}(n, n+1)$ between two anniversaries can be computed from the transition intensities via the matrix exponential operation, that is
$$
P^{x_0}(n, n+1)=e^{Q^{x_0}(n)}.
$$
Furthermore, we obtain transition intensities $Q^{x_0}(n)$ based on the parameter values shown in \cite{Pritchard2006} (Table 8, page 68), that were obtained by applying the penalized likelihood methodology to the interval-censored longitudinal data from the National Long-Term Care Study. For the sake of completeness, we report this Table in the Appendix \ref{appendix:ti}.  We stress out that this approach for modeling the health state of the PH is also adopted by \cite{Hsieh2018}, who, to the best of our knowledge, developed the most recent study on the evaluation of the LCA-GLWB insurance product. Using the same underlying transition matrices as in \cite{Hsieh2018} allows us to have a benchmark for validating some of the numerical outcomes shown in our paper and to propose original developments based on alternative product specifications and computation methods.

\begin{remark}
	The model by \cite{Pritchard2006} allows the generation of transition probabilities between health states for any age of the PH, without placing an upper limit on the age of the insured. Following common practice, see e.g. \cite{Forsyth2014} \cite{Molent2016}, we limit the maximum age of the insured to $122$. Consequently, whatever her health state at $121$, the probability of transition to health state $7$ is equal to $1$.
\end{remark}
	
	\subsection{Dynamics of the mutual fund and of the interest rate}
	Let us consider a risk-neutral measure $\mathbb{Q}$. The risk neutral dynamics of the stochastic processes describing the mutual fund $F_t$ and the interest rate $r_t$ are as follows:
	\begin{equation} \label{model}
	\begin{cases}
		dF_t=r_t F_t dt + \sigma_F F_t dW^1_t,\\
		dr_t=k_r(\theta-r_t) dt + \sigma_r \sqrt{r_t} dW^2_t,
	\end{cases}
	\end{equation}
	
	\noindent
	where the constant parameters $k_r, \theta$ and $\sigma_r$ are  the  rate of mean reversion, the long run mean and the   volatility of the interest rate, respectively. Furthermore, $W^1$ and $W^2$ are Brownian motions such that their correlation equals $\rho$.  
	
	\begin{remark}		
\myrb{	It is well-established in the literature the structural soundness of the CIR model in preventing negative rates under normal conditions, with no negative interest rates. Nevertheless, we remark that   the CIR model can be adapted to accommodate negative rates if necessary, through simple modifications such as shifting the rate distribution downward, making it a versatile tool in stochastic interest rate modeling (see, e.g.,  \cite{russo2017calibrating} and   \cite{Orlando2019}).  }
	\end{remark}
	\subsection{The GLWB-LTC}
	The GLWB-LTC insurance product guarantees the PH the right to make guaranteed withdrawals and can provide the payment of an annual disability allowance.
	
	At time $t=0$, the PH purchases the product through an initial one-off payment, which we denote by $P$. This amount determines the initial value of the two indicators governing the evolution of the contract: the account value $A$ and the benefit base $B$. In particular, the account value is used to calculate the maximum withdrawable amount, as well as the death benefit. The benefit base, on the other hand, governs the payments guaranteed by the contract, such as the LTC benefits and the minimum amount withdrawable by the PH. The state parameters $A$ and $B$ are two stochastic processes defined for each time instant between $t=0$ and $t=\tau$, the first anniversary of the contract inception following the insured's death. 
	
	\subsubsection{Initiation of the contract}
	The initial values of $A$ and $B$, denoted by $A_0^-$ and $B_0^-$ respectively, are both set equal to $P$:
	\[A_0^-=B_0^-=P.\]
	
	\myrb{Immediately after the initiation of the contract, the account value is charged with some specified fees, while the benefit base remains unaffected. Adopting the approach proposed by \cite{Hsieh2018}, the decrement in the account's value, owing to these fees, is regulated by two parameters, \(\alpha\) and \(\beta\). These parameters are indicative of the annual costs per unit for \(A\) and \(B\) respectively. Consequently, on each anniversary of the contract,  it is reduced by \(\alpha A\)   and \(\beta B\) as long as the account value remains positive.
	} Specifically, if we denote by $A_0^{1+}$ and $B_0^{1+}$ the value of $A$ and $B$ immediately after the fees are taken, the following holds:
	\[A_{0}^{1+}=\max\left(A_{0}^{-}-\alpha A_{0}^{-}-\beta B_{0}^{-},0\right),\quad B_{0}^{1+}=P. \]

At each anniversary $n$, hereafter, we will denote by $A_{n}^{2+}$ and $B_{n}^{2+}$ the values of $A$ and $B$ after the payment of the LTC to the PH and, then, by $A_{n}^{3+}$ and $B_{n}^{3+}$ the values of $A$ and $B$ after a withdrawal contingent on the choice of the PH at time $n$.

At contract inception, i.e., $n=0$, no LTC is paid to the PH in case of disability. Moreover,	
	the PH is not entitled to make any withdrawal. Therefore, neither $A$ nor $B$ is altered by a payment to, or a withdrawal from, the PH. Accordingly, the following holds:
	\[A_0^{3+}=A_0^{2+}=A_0^{1+},\quad B_0^{3+}=B_0^{2+}=B_{0}^{1+}.\]

	\subsubsection{Evolution of the contract between two anniversaries}
	\myrb{During the time between the beginning of the contract and the first anniversary, and similarly between any two consecutive anniversaries, } the account value $A$ varies  in proportion to the underlying fund, while the benefit base $B$ does not change: for all $t\in \left]0, 1\right[ $ it holds
	\begin{equation} \label{eq:1}
		\frac{dA_t}{A_t}=\frac{dF_t}{F_t}, \quad dB_s=0.
	\end{equation}  
	\noindent
	This holds also between any other two consecutive anniversaries.
	
	\subsubsection{Anniversary events if the PH is alive}
	On the first anniversary, but more generally on a generic anniversary thereafter, certain clauses of the contract are activated according to the PH's health status. Let $t$ represent the time of the $n$-th anniversary and let $A_{n}^{-}$ and $B_{n}^{-}$ be the values of $A$ and $B$ immediately before such a time. Thus, according to \eqref{eq:1}, we have 
	\[A_{n}^{-}=A_{n-1}^{3+}\cdot \frac{F_n}{F_{n-1}},\quad B_n^-=B_{n-1}^{3+}.\]
	
	\noindent
At a generic anniversary $n>0$, the account value is deduced by the fees. \\ Two payments can be received by the PH: the LTC benefit and the amount arising from the PH's withdrawal. Specifically, the   LTC benefit and the guaranteed minimum amount for withdrawal  are computed proportionally to the inflation-indexed benefit base and reduce the account value. We formalize the dynamics of $A$ and $B$ as follows.

	\begin{enumerate}
		\item Fees reduce the account value and do not alter the benefit base:
	\[A_{n}^{1+}=\max\left(A_{n}^{-}-\alpha A_{n}^{-}-\beta B_{n}^{-},0\right),\quad B_{n}^{1+}=B_{n}^{-}. \]
	\item The PH receives the LTC payment, $L_n(M_n)$, if her health state at this time corresponds to a disability condition covered by the contract. The amount of the LTC protection is proportional to the benefit base and is indexed by an inflation rate, denoted by $\pi$, as follows:
	
	\begin{equation}\label{eq:Theta}
		 L_{n}\left(M_{n}\right)=
		 \begin{cases}
		 	0 & \mathrm{if}\ M_{n}\in\left\{ 1,2,3\right\},\\
		 	cB_{n}^{1+}\left(1+\pi\right)^{n} & \mathrm{if}\ M_{n}\in\left\{ 4,5,6\right\}.
		 \end{cases}
	\end{equation}  
Accordingly, the values of $A$ and $B$ after the LTC payment are given by:
	\[A_{n}^{2+}=\max\left(A_{n}^{1+}-L_t\left(M_{n}\right),0\right),\quad B_{n}^{2+}=B_{n}^{1+}. \]
	\item
	Guaranteed withdrawals $G_n$ from the account value are possible at contractually defined percentages $g$ of the inflation-indexed benefit base, as follows: 
	\begin{equation}\label{eq:G}
		G_{n}=g \left(1+\pi\right)^{n} B_{n}^{2+},
	\end{equation}
	but	the PH  may also withdraw more or less than $G_n$. In this regard, let
	$W_{n}\in\left[0,\max\left(A_{n}^{2+},G_{n}\right)\right]$ be the amount
	that the PH withdraws. We stress out that 
	the maximum admissible withdrawal is given by $\max\left(A_{n}^{2+},G_{n}\right)$, that is the greater between the account value after the payment of the LTC and the minimum guaranteed withdrawal. 
\end{enumerate}
	In order to distinguish whether or not $W_{n}$ exceeds the guaranteed amount, we make use of an auxiliary parameter $\gamma$, as in   \cite{Forsyth2014}, whose value expresses the choice made by the PH with respect to the amount to be withdrawn at anniversary $n$. Specifically, $W_{n}$ is controlled
	by the parameter $\gamma_{n}\in\left[0,2\right]$ as follows:
	\[
	W_{n}=\begin{cases}
		\gamma_{n}G_{n} & \text{if }\gamma_{n}\leq1,\\
		\left(2-\gamma_{n}\right)G_{n}+\left(\gamma_{n}-1\right)  A_{n}^{2+}  & \text{if }\text{\ensuremath{\gamma_{n}>1}.}
	\end{cases}
	\]
 While $W_{n}$ represents the chosen withdrawal by the PH, let us denote by $Y_{n}$ the actual amount the PH receives, at time $t$. In this respect, we distinguish three cases:
	
	\begin{itemize}
		\item if $\gamma_{n}=0$, no money is withdrawn from the account. In this case, the PH renounces making a withdrawal and she is rewarded with a proportional bonus $b$ that increases the benefit base. Specifically:
		\begin{align*}
			Y_{n}=W_{n} & =0,\\
			A_{n}^{3+} & =A_{n}^{2+},\\
			B_{n}^{3+} & =B_{n}^{2+}\left(1+b\right).
		\end{align*}
		\item if $0<\gamma_{n}\leq1$ the performed withdrawal is less than or equal to the
		minimum guaranteed one (the latter case corresponds to  $\gamma_{n}=1$):
		\begin{equation}\label{eq:gamma01}
				\begin{aligned}
			Y_{n}=W_{n} &=\gamma_{n} G_{n}^{2+},\\
			A_{n}^{3+} & =\max\left(A_{n}^{2+}-W_{n},0\right),\\
			B_{n}^{3+} & =B_{n}^{2+}.
		\end{aligned}
		\end{equation}

		\item if $1<\gamma_{n}\leq 2$ the performed withdrawal is greater than the minimum
		guaranteed one. A proportional cost $\kappa_{n}$ is applied to the part of the
		withdrawal exceeding the guaranteed amount, this implying that the amount $Y_{n}$ that is actually received by the PH is smaller than $W_{n}$: 
		\begin{equation}\label{eq:gamma12}
		\begin{aligned}
			W_{n} & =\left(2-\gamma_{n}\right)\cdot G_{n}+\left(\gamma_{n}-1\right)A_{n}^{2+},\\
			Y_{n} & =G_{n}+\left(W_{n}-G_{n}\right)\left(1-\kappa_{n}\right),\\
			A_{n}^{3+} & =\max\left(A_{n}^{2+}-W_{n},0\right),\\
			B_{n}^{3+} & = B_{n}^{2+}\left(2-\gamma_{n}\right).
		\end{aligned}
	\end{equation}
 Usually, the cost $\kappa_{n}$ decreases over time and goes down to zero after a few years. 
	
\hskip0.5cm
We stress out that the case $\gamma_{n}=2$ implies total lapse and the end of the contract.  In this particular case, 
	$$ W_{n}=A_{n}^{2+},\ Y_{n} =G_{n}+\left(A_{n}^{2+}-G_{n}\right)\left(1-\kappa_{n}\right),\ A_{n}^{3+}=B_{n}^{3+}=0.$$  
	We denote by $\ell$ the anniversary, if it exists, such that $\gamma_\ell=2$. If $\gamma_{n}$ is always different from $2$, we define $\ell=+\infty$.
	
	\end{itemize}
\subsubsection{Anniversary events if the PH is dead}

		If the PH has died during the last year, i.e. $n=\tau$, her heirs receive a death benefit, calculated as follows, and the contract ends:
	$$Y_{\tau}	=W_{\tau}=g_{\tau}B_{\tau}^{-}+\max\left(0,A_{\tau}^{-}-g_{\tau}B_{\tau}^{-}\right),\ 
	A_{\tau}^{+}	=B_{\tau}^{+}=0.$$
 
We stress out that the contract may be terminated for two reasons: total lapse, or the death of the PH.
If we denote with $T$ the anniversary of contract termination, then $T=\min\left(\tau, \ell \right)$.

 \begin{remark}
 	Fees are paid since time $n=0$; the first withdrawal takes
 	place at time $n=1$. No fees are paid at the first anniversary after the death time, and no LTC payments are made as well ($L_\tau(7)=0$).
 	
 \end{remark}

	\section{Pricing the GLWB-LTC contract} 
	The value of the contract at any time $t$ depends on four state variables, namely $A_t$, $B_t$, $r_t$ and $M_t$, so we denote it as a function of these four state variables by $\mathcal{V}(A_t,B_t,r_t,M_t,t)$. In addition, at the $n$-th anniversary,   we  write $n^{-}$, $n^{1+}$, $n^{2+}$ and $n^{3+}$ to indicate the value of the contract just before the $n$-th anniversary, after the withdrawal of fees, after the payment of the LTC and after the withdrawal of the annuity, respectively.

	\subsection{Withdrawal strategy}
	The withdrawal strategy performed by the PH is a crucial point in the evaluation of the contract. Following the classification introduced by \cite{Bacinello2011}, we consider three particularly relevant strategies: ``static", ``mixed" and ``dynamic withdrawal". \myrb{Moreover, we also investigate a fourth strategy, termed ``full dynamic".} 
	
	\medskip Under the static withdrawal strategy, the PH has only one choice, that resides in withdrawing the minimum guaranteed sum, i.e., $\gamma=1$, at each anniversary in which the PH is alive. This static strategy is the only one considered in \cite{Hsieh2018}. In this particular case, the  benefit base never changes and is always equal to the premium $P$ paid by the PH at time zero. Consequently, the fees associated with the benefit base are constant at each anniversary and equal to $\beta P$.
	
		According to risk neutral valuation, under the static withdrawal strategy, the initial value of the contract is the expected value of future cash flows:
	$$ \mathcal{V}(P,P,r_0,M_0,0^{-})=\mathbb{E}^{\mathbb{Q}}\left[ \sum_{n=1}^{\tau} e^{-\int_{0}^{n} r_s\,ds} \left(L_{n}(M_{n})+ Y_{n}\right) \right].  $$

	\medskip The mixed strategy implies that the PH continues to draw at the guaranteed minimum rate until she dies or decides to terminate the contract early. Compared to the
	static strategy, there is thus the possibility of a total lapse, which can be achieved by choosing $\gamma=2$. 
	
			In the case of the mixed withdrawal strategy, the initial value of the contract is the expected value of future cash flows, obtained by using the optimal stopping strategy:
	\begin{equation} \label{eq:2}
		\mathcal{V}(P,P,r_0,M_0,0^{-})=\max_{\ell\in \mathcal{T}} \mathbb{E}^{\mathbb{Q}}\left[ \sum_{n=1}^{\min(\tau,\ell)} e^{-\int_{0}^{n} r_{s} \,ds} \left(L_{n}(M_{n})+ Y_{n}\right)\right],
	\end{equation}  
	 with $\mathcal{T}$ the set of optimal stopping times.
	 The optimal stopping time $\ell$ can easily be computed by means of dynamic programming. Specifically, $\ell$ is the first anniversary such that the value of the whole position in case of total lapse is larger than the continuation value. Let us write $A_{n}^{3+}\left(\gamma_{n} \right), B_{n}^{3+}\left(\gamma_{n} \right), W_{n}^{3+}\left(\gamma_{n} \right)$ and $ Y_{n}^{3+}\left(\gamma_{n} \right)$ to denote the values of $A_{n}^{3+}, B_{n}^{3+}, W_{n}^{3+}$ and $Y_{n}$ for a specific value of $\gamma_{n}$.  Then:
	 $$ \ell=\min\left\lbrace n=1,\dots,\tau-1\ \mathrm{s.t.}\ Y_{n}(2)\geq Y_{n}(1)+\ 
	 \mathcal{V}\left( A_{n}^{3+}(1),B_{n}^{3+}(1),r_{n},M_{n},n^{3+}\right) \right\rbrace. $$ 
	\noindent
	
	\medskip 
	
	Under the dynamic withdrawal strategy, the PH can freely choose the value of $\gamma$ $\in [0,2]$, for each withdrawal opportunity. Then, she can choose not to withdraw, or to withdraw more or less than the minimum guaranteed amount, with the maximum withdrawal implying the early termination of the contract. Equation \eqref{eq:2} also holds in this case. Here, we suppose that the PH chooses the value of $\gamma_{n}$
	that maximizes the total wealth she received, so that the value of $\gamma_{n}$ is defined as:
	$$ \gamma_{n} = \mathrm{arg}\max_{\gamma\in[0,2]} \left[ Y_{n}(\gamma)+\mathcal{V}\left( A_{n}^{3+}(\gamma),B_{n}^{3+}(\gamma),r_{t},M_{n},n^{3+}\right)\right] .$$
	
	\myrb{Finally, let us consider the full dynamic strategy, which extends the dynamic strategy by admitting that PH can perform total surrenders even in the time between two anniversaries. In this case, as usual, at any time $t\geq 0$  which is not an anniversary, 	the following equation holds
		\[\mathcal{V}\left( A_{t} ,B_{t},r_{t},M_{t},t\right)=\max\left\lbrace A_t\left(1-\kappa_{\left\lfloor t\right\rfloor } \right), \mathcal{C}\left( A_{t} ,B_{t},r_{t},M_{t},t\right)   \right\rbrace, \]
		where $\mathcal{C}\left( A_{t} ,B_{t},r_{t},M_{t},t\right) $ is the continuation value, that is the expected value of future discounted cash flows if the surrender option is not exercised at time $t$.
	}
	\begin{remark}
		The criterion for selecting the optimal strategy is based on maximizing the expected value, under risk-neutral probability, of the payment from the insurer to the insured. Alternatives have been proposed in the literature (see e.g. \cite{Choi2017} or \cite{Moenig2021}), which are based, for example, on maximizing expected utility. Our model can be adapted to consider these cases as well. In particular, in this case, the value of the policy should be calculated separately according to the insurer and according to PH. The latter determines the optimal withdrawal strategy, which is then used in the assessment of the cost of cover according to the insurer.
	\end{remark} 
	
		\begin{remark}
		
		\cite{Bacinello2022} prove by backward induction  that, if the optimality criterion is the maximisation of the value of total wealth, the optimal exercise strategy always consists of one of these three actions: to withdraw nothing, to withdraw the guaranteed minimum amount or to withdraw the maximum possible (by ending the contract). Such a feature of GLWB contracts is also known as the \textit{bang bang condition}. From the results of the numerical experiments, we found the same result for our product. Although the numerical method we propose has no difficulty in handling even intermediate withdrawal values, limiting the choice of possible range values would lead to a more efficient numerical procedure.
	\end{remark}

\begin{remark}
	We calculate the price of our insurance products based on a risk-neutral valuation approach. In this framework, we assume that the risks associated with death and disability can be diversified, as supported by \cite{ Milevsky2006}. If this assumption does not hold, the risk-neutral valuation can be modified through an actuarial premium principle, as noted by \cite{Gaillardetz2011}. 
\end{remark}

	\subsection{Similarity reduction}\label{sr}
	GLWB-type variable annuities are interesting from a computational point of view as their value is proportional to the ratio of the account value to the benefit base.   In mathematical terms, for every positive constant $\eta$, 
	\[\eta \cdot \mathcal{V}\left(A_{t}, B_{t}, r_{t}, M_{t},t \right)=\mathcal{V}\left(\eta   A_{t}, \eta   B_{t}, r_{t}, M_{t},t \right). \]
	This property, which has already been exploited in the literature (see e.g. \cite{shah2008}, \cite{Forsyth2014} or \cite{Molent2016}) also applies to the contract we consider in this paper, since all cash flows are proportional to the account value and to the  benefit base.	
	This useful property makes it possible to reduce the size of the problem, assuming $B$ to be constantly equal to its initial value $P$. In fact, taken $\eta=\frac{P}{B_{t}}$, one obtains
		\[  \mathcal{V}\left(A_{t}, B_{t}, r_{t}, M_{t},t \right)=\frac{B_{t}}{P}\cdot\mathcal{V}\left(\frac{A_{t}}{B_{t}}   P, P, r_{t}, M_{t},t \right). \]
		As a result, the evaluation of the contract is more efficient from a numerical point of view.
	
	\subsection{The numerical method}
 The numerical method, termed Tree-LTC, we propose to evaluate the GLWB-LTC contract in the BS-CIR model is an adapted and improved version of the tree model introduced by \cite{Appolloni2015}, that is suitable to our purposes in that it allows the evaluation of American derivative instruments in the considered stochastic model framework. Furthermore, the method in \cite{Appolloni2015} proves to be robust and stable from a numerical point of view. In a nutshell, the method constructs two trees that discretize the short interest rate and the underlying respectively. Subsequently, these structures are combined to obtain a two-dimensional tree. The transition probabilities relative to the nodes of the tree are computed by matching the conditional mean and the conditional covariance between the continuous and the discrete processes.

 \subsubsection{The tree for the interest rate}\label{ss:tree_for_r}
 The first step of the algorithm is to create a lattice to discretize the stochastic rate $r$.  \cite{Appolloni2015} suggest using a variation of the tree proposed by \cite{Nelson1990}, which  matches a first-order approximation of the first two moments of the process $r$. Such a tree works rather well when the maturities involved are relatively short, but \myrb{the computational cost can become high in the case of long maturities, such as those involved by our product. So here we propose an updated version of that tree that allows us to limit the number of nodes considered in the discretization. In practice, thanks to the properties of the CIR process,  it is necessary to consider only nodes between zero and a maximal value that depends only on the discretization step, in order to obtain a Markov chain that converges weakly to the continuous process $r$.}

 Specifically, \myrb{we consider a binomial tree which is used to define a  Markov chain which  matches a suitable approximation of  the first and the second moment of the continuous time process $r$. This feature guarantees weak convergence to the CIR process, as reported by \cite{Nelson1990}.} First of all, let $T\in \mathbb{N}$ be the maximum duration in years of the GLWB-LTC contract. For example, if the age of the PH at contract inception is $60$, then $T=122-60=62$.  Let us divide such period in $N T$ time steps, so that the time increment is $\Delta t=1/N$. We approximate the process $r$ in $[0,T]$  with a discrete time process $\bar{r}=\left\lbrace \bar{r}_{i}\right\rbrace_{i=0,\dots,NT}$, so that $\bar{r}_{i}$ approximates $r_{i\Delta t}$. The possible values  of the process $\bar{r}$ are defined as follows: \myrb{for $i=0,1,\dots,NT$ and  $k=0,1,\dots,i$   we set 
 with
 $$ \mathcal{R}_{i,k}=\left( \max \left( \sqrt{r_0} + (2k-i)\sigma_r\sqrt{\Delta t} ,0 \right) \right) ^2.
 $$
 In particular, we observe that, if $i$ is even,  the initial interest rate $r_0$ is included among these values as $\mathcal{R}_{i,i/2}=r_0$. Moreover, we observe that, if we set 
 $$ \underline{k}(i) =\left\lfloor  \frac{i}{2}-\frac{1}{\sigma_r}\sqrt{\frac{r_0}{\Delta t}}  \right\rfloor,$$ 
 then for all values $k=0,\dots,\underline{k}(i)$, it holds that $\mathcal{R}_{i,k}=0$, so one can consider only the values $k= k_{\mathrm{min}}(i) ,\dots, i$, where $k_{\mathrm{min}}(i)=\max\left\lbrace 0,\underline{k}(i)\right\rbrace$. With respect to \cite{Appolloni2015},  we thus manage to reduce the number of nodes to be processed during contract evaluation, by avoiding zero-value duplication.}
 
 \myrb{Let us now proceed to discuss the possible state transitions between time steps and their probabilities. First of all, we define $\left( \mu_r\right)_{i,k}=k_r\left(\theta-\mathcal{R}_{i,k} \right)$ as the drift coefficient at  $\mathcal{R}_{i,k}$. Then, for a node $\mathcal{R}_{i,k}$,  \cite{Appolloni2015} define the level of the upcoming two nodes as
{\small \begin{align}
 k_d^{\mathrm{ACZ}}(i,k)&=\max\left\lbrace k^{*}:0\leq k^* \leq k\ \mathrm{and}\ \mathcal{R}_{i,k}+\left( \mu_r\right)_{i,k}\Delta t\geq \mathcal{R}_{i+1,k^*}  \right\rbrace\cup\left\lbrace 0 \right\rbrace, \\
  k_u^{\mathrm{ACZ}}(i,k)&=\min\left\lbrace k^{*}:k+1\leq k^* \leq i+1\ \mathrm{and}\ \mathcal{R}_{i,k}+\left( \mu_r\right)_{i,k}\Delta t\leq \mathcal{R}_{i+1,k^*} \right\rbrace 	\cup\left\lbrace i+1 \right\rbrace.
\end{align}}
Here, for $k=k_{\min}(i),\dots,i$, we  set
{\small \begin{align}\label{10}
		k_d(i,k)&=\max\left\lbrace k_d^{\mathrm{ACZ}}(i,k),   k_{\mathrm{min}}(i+1) \right\rbrace, \\
		k_u(i,k)&=
		\begin{cases}
			k_u^{\mathrm{ACZ}}(i,k) & \mathrm{if}\ \mathcal{R}_{i,k}<\theta,\\
				k_d(i,k)+1& \mathrm{otherwise}.
			\end{cases}
\end{align}} 
Moreover, it is possible to prove (see Appendix \ref{appendix:ub}) that for each time step $i$, there exist an index, denoted by $k_{\max}(i)$ so that all nodes $\mathcal{R}_{i,k}$ with $k>k_{\max}(i)$ cannot be reached when starting from $\mathcal{R}_{0,0}=r_0$. Therefore, one can discard from the tree those nods.  Therefore, the only useful nodes for defining the tree are those that verify the relation $k_{\mathrm{min}}(i)\leq k\leq  k_{\mathrm{max}}(i)  $. We stress out that this observation improves the efficiency of the algorithm, as it reduces drastically   the computational cost, in particular when a high number $NT$ of time steps is employed.\\
 The transition probabilities among the nodes are defined to match the first order approximation of the first  moment of the CIR process. Starting from the node $(i,k)$, the probability that the process jumps to $(i+1,k_u(i,k))$ is defined as
 \begin{equation}
	p^{R}_{i,k}=\max\left\lbrace 0, \min\left\lbrace1, \frac{(\mu_r)_{i,k}h + r_{i,k} - r_{i+1,k_d(i,k)}}{r_{i+1,k_u(i,k)} - r_{i+1,k_d(i,k)}} \right\rbrace \right\rbrace. 
\end{equation}  
 Of course, the probability  that the process $\overline{r}$ jumps to $(i+1,k_d(i,k))$ is $1-p^{R}_{i,k}$.
} 
 \subsubsection{The tree for the account value}\label{ss:tree_for_A}

The second step of the Tree-LTC algorithm is to create a lattice to discretize the underlying, i.e. account value $A$. Specifically, we approximate the process $A$ in $[0,T]$  with a discrete time process $\bar{A}=\left\lbrace \bar{A}_{i}\right\rbrace_{i=0,\dots,NT}$, so that $\bar{A}_{i}$ approximates $A_{i\Delta t}$. In \cite{Appolloni2015}, this grid of values, generated from a uniform mesh of values for the log-price of the account value, is time-dependent: the number of nodes grows linearly with the number of time steps, as usual in any tree structure. In our case, this fact hampers the evaluation of the GLWB-LTC contract because, as payments are made, the account value may experience downward movements due to withdrawals, thus assuming values outside the mesh of nodes in the tree. Consequently, we prefer to discretize the account value by a complete grid of values, which does not change over time and which defines the support for a Markov chain. Specifically, we set two values, $ A_{\min}\approx 0$ and $ A_{\max}>>P$ ($P$ is the initial value for $A$), and create a uniform mesh between the logarithm these two values. Specifically, we set
\begin{align*}
	j_{\min}&=-\min\left\lbrace j^*\in\mathbb{Z}\ \mathrm{s.t.}\ P\cdot\exp\left(  j^*\cdot \sigma \sqrt{\Delta t}\right)\geq A_{\min} \right\rbrace+1,\\
	  j_{\max}&=\max\left\lbrace j^*\in\mathbb{Z}\ \mathrm{s.t.}\ P\cdot\exp\left(  j^*\cdot \sigma \sqrt{\Delta t}\right)\leq A_{\max} \right\rbrace+j_{\min},
\end{align*} 
and for $j=1,\dots , j_{\max}$, we define the node values as
$$ \mathcal{A}_j=P\cdot\exp\left(\left( j-j_{\min}\right)\sigma \sqrt{\Delta t}\right),  $$ 
	so that $\mathcal{A}_1\approx A_{\min}$, $\mathcal{A}_{j_{\min}}=P,$ and $\mathcal{A}_{j_{\max}}\approx A_{\max}$. Moreover, since the account value can also be empty, we include zero among the possible values  by setting
	$\mathcal{A}_0=0$.
	Finally, we define $\mathcal{G}_{A}=\left\lbrace \mathcal{A}_j, j=0,1,\dots,j_{\min},\dots,j_{\max} \right\rbrace$ as the set of the nodes of the lattice for $A$.
	
	\subsubsection{The joint distribution}
	The marginal transition probabilities for the lattice for $A$ are not defined directly. Instead, joint probabilities are defined for the pair $\left(\mathcal{A}_j,\mathcal{R}_{i,k} \right) $. Specifically, suppose that at the $i$-th time step the location of the couple $\left(\bar{A}_i,\bar{r}_i \right)$ is given by $\left(\bar{A}_i,\bar{r}_i \right)=\left(\mathcal{A}_{{j}},\mathcal{R}_{i,k} \right)$. We begin the definition of the transition probabilities by assuming $j>0$, so that $\mathcal{A}_{{j}}>0$.
	  We define 
	\begin{align*} 
		& j_d(i,j,k)=\max \left\{j^*\ \mathrm{s.t.}\ 1 \leq  j^* <  {j} \text { and } \mathcal{A}_{{j}}\cdot\left( 1+\mathcal{R}_{i,k}\Delta t\right)   \geq  \mathcal{A}_{j^*}\right\} \cup \left\{1\right\},  
		\\ 
		& j_u(i,j,k)=\min \left\{j^*\ \mathrm{s.t.}\ {j} <  j^* \leq  j_{\max} \text { and } \mathcal{A}_{{j}}\cdot\left( 1+\mathcal{R}_{i,k}\Delta t\right)   \leq  \mathcal{A}_{j^*}\right\}\cup \left\{j_{\max}\right\}.
	\end{align*}
	\myrb{Moreover, it is possible to prove that, as $\Delta t$ tends to zero, $j_d(i,{j}, {k})$ and $j_u(i,{j}, {k})$ converge respectively to $j$ and $j+1$ for all $j=2,\dots,j_{\max}-1$ (see Appendix \ref{appendix:co}).}
 	The probability of an up-movement of the tree for $\bar{A}$ is set as:
	$$ p^{A}_{i,j,k}=\max\left\lbrace  \min \left\lbrace \frac{\mathcal{A}_{{j}}\cdot\left( 1+\mathcal{R}_{i,k}\Delta t\right) -\mathcal{A}_{i_d(i,{j}, {k})}}{ \mathcal{A}_{j_u(i,{j}, {k})} -\mathcal{A}_{j_d(i,{j}, {k})}   },1 \right\rbrace ,0 \right\rbrace.$$
  To simplify notation, we write $j_d$, $j_u$ and $p^{A}_u$ instead of $  j_d(i,{j}, {k}),j_u(i,{j}, {k})$ and $ p^{A}_{i,j,k}$ respectively,  leaving the dependence on ${i,j}$ and ${k}$, taking it as for granted.  Moreover, let $p^{A}_d=1-p^{A}_u$ the probability for a down movement.
	
	\myrb{Now, let us denote with $k_d$ and $k_u$ the position of future nodes from $\mathcal{R}_{i,k}$ (also in this case we omit the dependence on $i $ and ${k}$), and let  $p^{R}_{d}$ and $p^{R}_{u}$ be the  probabilities for a down and an up movement of the process $\overline{r}$, respectively.}
	
	\myrb{Starting from an assigned node, the discrete time processes $\bar{A}$ can move to two future nodes, and so does the process $\bar{r}$. Thus, the future nodes associated with the pair $\left(\mathcal{A}_j,\mathcal{R}_{i,k} \right) $ are four, namely:
	$$\left(\mathcal{A}_{j_d},\mathcal{R}_{i+1,k_d} \right),\  \left(\mathcal{A}_{j_d},\mathcal{R}_{i+1,k_u} \right),\  \left(\mathcal{A}_{j_u},\mathcal{R}_{i+1,k_d} \right),\    \left(\mathcal{A}_{j_u},\mathcal{R}_{i+1,k_u} \right), $$
	and let 
	$$ p_{d,d}, \ p_{d,u},\ p_{u,d}, \ p_{u,u}, $$ be the corresponding probabilities.	
	These probabilities are determined as the unique solution of the following linear system, whose equations correspond to imposing the matching of the first two moments for both the processes $r$ and $A$, and of the covariance between the two processes:
	\begin{equation}\label{sistema}
			\begin{cases}
			p_{d,d}+p_{d,u}+p_{u,d}+ p_{u,u}&=1,\\
			p_{d,d}+p_{d,u} &=p^{A}_{d},\\
			p_{d,d}+p_{u,d} &=p^{R}_{d},\\ 
			m_{d,d}p_{d,d}+ m_{d,u}p_{d,u}+m_{u,d}p_{u,d} + m_{u,u}p_{u,u}&=\rho \sigma_r\sigma_F\sqrt{\mathcal{R}_{i,k}}\mathcal{A}_{j}\Delta t,\\
		\end{cases}
	\end{equation}
with 
$$m_{d,d}=\left(\mathcal{A}_{j_d}\!-\!\mathcal{A}_{{j}} \right)\!\left(\mathcal{R}_{i+1,k_d}\!-\!\mathcal{R}_{i,k} \right),\quad
m_{u,d}=\left(\mathcal{A}_{j_u}\!-\!\mathcal{A}_{{j}} \right)\!\left(\mathcal{R}_{i+1,k_d}\!-\!\mathcal{R}_{i,k} \right),  
$$ 
$$m_{d,u}=\left(\mathcal{A}_{j_d}\!-\!\mathcal{A}_{{j}} \right)\!\left(\mathcal{R}_{i+1,k_u}\!-\!\mathcal{R}_{i,k} \right),\quad
m_{u,u}=\left(\mathcal{A}_{j_u}\!-\!\mathcal{A}_{{j}} \right)\!\left(\mathcal{R}_{i+1,k_u}\!-\!\mathcal{R}_{i,k} \right).
$$
This system always admits one and only one positive solution, as discussed in   Appendix \ref{appendix:co}.
 Finally, we discuss the case $i=0$, which corresponds to  $\mathcal{A}_{0}=0$. The value $0$ is an absorbing class for the account value: once $A$ is depleted, it can no longer become positive again. Therefore, if $\left(\bar{A}_i,\bar{r}_i \right)=\left(\mathcal{A}_0,\mathcal{R}_{i,k} \right)$, then the nodes reachable by the process at the next instant are 	$\left(\mathcal{A}_{0},\mathcal{R}_{i+1,k_d} \right)$ and $ \left(\mathcal{A}_{0},\mathcal{R}_{i+1,k_u} \right)$, with probabilities equal to $p_d^R$ and $p_u^R$ respectively. }
	
	\subsection{Pricing}
 We apply the Tree-LTC method to compute an approximation $\bar{\mathcal{V}}$ of the GLWB-LTC contract value $\mathcal{V}$.  First of all,  we set $N$, the number of time steps per year (as defined in Subsection \ref{ss:tree_for_r}), $A_{\min}$ and $A_{\max}$, the limits for the positive nodes of $\mathcal{G}_A$ (as defined in Subsection \ref{ss:tree_for_A}). In addition, we recall that $T_{}$ is the difference between $122$ (maximum age) and the initial age $x_0$ of the insured.
 At each time step $i$ of the Tree-LTC algorithm, we define a grid of values to diffuse the processes $\bar{A},\bar{r}$ and $M$:
\begin{equation}\label{ggg}
	  \mathcal{G}_i=\mathcal{G}_A\times \mathcal{G}^i_r\times\left\lbrace {1,\dots,7}\right\rbrace, 
\end{equation} where the set $\left\lbrace {1,\dots,7}\right\rbrace$ describes the health states of the PH.
For each anniversary $n=0,\dots, T$, we define a function $\bar{\mathcal{V}}_{n}$ which approximates the real contract fair value at the year $n$. Specifically, for any point $\left(\mathcal{A}_{j}, \mathcal{R}_{nN,k}, k \right) \in \mathcal{G}$ we have
$$\bar{\mathcal{V}}_n\left(\mathcal{A}_j, \mathcal{R}_{nN,k}, h \right)\approx\mathcal{V}\left(\mathcal{A}_j, \mathcal{R}_{nN,k}, h,n \right).$$
We stress out that, by similarity reduction discussed in Section \ref{sr}, we can assume $B_n=P$ for all anniversaries, so, hereinafter, $G_n$ and $L_{n}\left(M_n \right) $ are computed according to $B_{n}^{1+}=B_{n}^{2+}=P$.
The computation of the function $\bar{\mathcal{V}}_n$ is achieved by proceeding backward in time.
At maturity, i.e. $n=T$ and $i=NT $, no PH is longer alive, so we set:
$$\bar{\mathcal{V}}_{T }\left(\mathcal{A}_j, \mathcal{R}_{NT,k}, h \right)=
G_{T }+\max{\left( 0,\mathcal{A}_j-G_{T }\right)}.$$ 

Let us now consider a general anniversary $n\in\left\lbrace 0,\dots, T -1\right\rbrace $ and assume that we have already calculated the function $\bar{\mathcal{V}}_{n+1}$ at the anniversary $n+1$. To compute $\bar{\mathcal{V}}_{n}$  on the grid $\mathcal{G}_{nN}$  for ${h}= 7$, just set
$$\bar{\mathcal{V}}_{n}\left( \mathcal{A}_j, \mathcal{R}_{nN,k},  7\right) = G_{n}+\max{\left( 0,\mathcal{A}_j-G_{n}\right). }$$

As far as the health state $h\neq 7$ is considered, the following actions are carried out in this specified order.
\begin{enumerate}
	\item Mix the values of $\bar{\mathcal{V}}_{n+1}$ according to the health transition probability $p_{h,h'}\left(n,n+1 \right)$ (from state $h$ at year $n$ to state $h'$ at year $n+1$). Specifically, we define:
	$$\bar{\mathcal{V}}_{n+1}^{\mathrm{mix}}\left(\mathcal{A}_j, \mathcal{R}_{(n+1)N,k},  h \right)=\sum_{h'=1}^{7}p_{h,h'}\left(n,n+1 \right)\bar{\mathcal{V}}_{n+1}^{\mathrm{}}\left(\mathcal{A}_j, \mathcal{R}_{(n+1)N,k}, h' \right).$$
	\item Compute the discount expected value of the mix, by using the Tree-LTC algorithm. Specifically, we divide the time lapse $[n,n+1]$ into $N$ sub-intervals. Let us term:
	$$\bar{\mathcal{V}}_{n,N}\left( \mathcal{A}_j, \mathcal{R}_{(n+1)N,k}, h\right)=\bar{\mathcal{V}}_{n+1}^{\mathrm{mix}}\left( \mathcal{A}_j, \mathcal{R}_{(n+1)N,k}, h\right),$$ 
	as the contract value at time $n+1$ before any payment is performed. For each sub-time step $i=(n+1)N-1,\dots ,nN$ we employ the Tree-LTC algorithm. We distinguish some cases. 
	\begin{enumerate}
		\item \myrb{If $j=0$, that is $\mathcal{A}_{j}=0$, then
	\begin{samepage}		
	{\small\begin{multline*}
		 \bar{\mathcal{V}}_{n,i}\left(\mathcal{A}_0, \mathcal{R}_{i,k},h\right) =e^{-\Delta t \mathcal{R}_{i,k} }
		 \left[\phantom{+}p^{R}_{d}\bar{\mathcal{V}}_{n,i+1}\left( \mathcal{A}_{0 }, \mathcal{R}_{i+1,k_d\left(k \right) }, h \right)\right.\\
		 \left.+p_{u}^{R}\bar{\mathcal{V}}_{n,i+1}\left( \mathcal{A}_{0 }, \mathcal{R}_{i+1,k_u\left(k \right) }, h \right)\right].
	\end{multline*}}
\end{samepage}}
	\item \myrb{If $j=2,\dots,j_{\max}-1$,
	\begin{samepage}	
	{\small	\begin{align*}
			\bar{\mathcal{V}}_{n,i}\left(\mathcal{A}_j, \mathcal{R}_{i,k},h\right) =e^{-\Delta t \mathcal{R}_{i,k} }\left[\right.
			&\left.\phantom{+}p_{d,d}\bar{\mathcal{V}}_{n,i+1}\left( \mathcal{A}_{j_d\left(j,k \right) }, \mathcal{R}_{i+1,k_d\left(k \right) }, h \right)\right.  \\
			&
			+p_{d,u}\bar{\mathcal{V}}_{n,i+1}\left( \mathcal{A}_{j_d\left(j,k \right) }, \mathcal{R}_{i+1,k_u\left(k \right) }, h \right)  \\
			&
			+p_{u,d}\bar{\mathcal{V}}_{n,i+1}\left( \mathcal{A}_{j_u\left(j,k \right) }, \mathcal{R}_{i+1,k_d\left(k \right) }, h \right)  \\
			&\left. 
			+p_{u,u}\bar{\mathcal{V}}_{n,i+1}\left( \mathcal{A}_{j_u\left(j,k \right) }, \mathcal{R}_{i+1,k_u\left(k \right) }, h \right)\right].
	\end{align*}}
\end{samepage}
}
\item\myrb{ If $j=1$ or $j=j_{\max}$,   we use linear interpolation to estimate $\bar{\mathcal{V}}_{n,i}\left(\mathcal{A}_j, \mathcal{R}_{i,k},h\right) $. That is because   the points $\mathcal{A}_1 e^{-\sigma_F\sqrt{\Delta t}}$ and $\mathcal{A}_{j_{\max}} e^{\sigma_F\sqrt{\Delta t}}$ are not included in the grid $\mathcal{G}_A$, and therefore it is necessary to impose some boundary conditions to determine the value of the contract at these points. This condition can be justified by the fact that when the account value is very large or very small, the contract value tends to behave as a linear function of the account value itself, as already remarked and exploited by \cite{Forsyth2014}.
	Specifically, we set
{\footnotesize 	\begin{align*}
	\bar{\mathcal{V}}_{n,i}\left(\mathcal{A}_1, \mathcal{R}_{i,k},h\right) &=\frac{	\bar{\mathcal{V}}_{n,i}\left(\mathcal{A}_3, \mathcal{R}_{i,k},h\right)-	\bar{\mathcal{V}}_{n,i}\left(\mathcal{A}_2, \mathcal{R}_{i,k},h\right)  }{\mathcal{A}_3-\mathcal{A}_2}\left(\mathcal{A}_1-\mathcal{A}_2 \right) +\bar{\mathcal{V}}_{n,i}\left(\mathcal{A}_2, \mathcal{R}_{i,k},h\right), \\
	   \bar{\mathcal{V}}_{n,i}\left(\mathcal{A}_{j_{\max}}, \mathcal{R}_{i,k},h\right) &=\frac{	\bar{\mathcal{V}}_{n,i}\left(\mathcal{A}_{j_{\max}-2}, \mathcal{R}_{i,k},h\right)-	\bar{\mathcal{V}}_{n,i}\left(\mathcal{A}_{j_{\max}-1}, \mathcal{R}_{i,k},h\right)  }{\mathcal{A}_{j_{\max}-2}-\mathcal{A}_{j_{\max}-1}}\left(\mathcal{A}_{j_{\max}}-\mathcal{A}_{j_{\max}-1} \right)\\& +\bar{\mathcal{V}}_{n,i}\left(\mathcal{A}_{j_{\max}-1}, \mathcal{R}_{i,k},h\right). \\
\end{align*}}}
\end{enumerate}\myrb{Moreover, if the full dynamic approach is considered, at each sub-time step $i$, we replace $\bar{\mathcal{V}}_{n,i}\left(\mathcal{A}_j, \mathcal{R}_{i,k},h\right)$  with 
\[\max\left\lbrace \mathcal{A}_{j}\left(1-\kappa_{n} \right) , \bar{\mathcal{V}}_{n,i}\left(\mathcal{A}_j, \mathcal{R}_{i,k},h\right)\right\rbrace \] to account for the possibility of a total surrender at time $t=i\Delta t$.}
\item Account for the possible withdrawal (only if $n>0$).  We  term 
$ \bar{\mathcal{V}}^{3+}_{n}= \bar{\mathcal{V}}_{n,nN}$ the contract value at anniversary $n$ after all payments are performed. Let $\gamma=\gamma_{n}\left(\mathcal{A}_j, \mathcal{R}_{nN,k}, {h} \right) $ be the value determined according to the withdrawal strategy considered, for the withdrawal at the $n$-th anniversary, for $A_n=\mathcal{A}_j$, $B_n=P$, $r=\mathcal{R}_{nN,k}$ and $M_n=h$. The contract value before the withdrawal takes place, denoted by $\bar{\mathcal{V}}^{2+}_{n}$ is computed as follows.   So
\begin{itemize}
	\item If $\gamma=0$, then 
\begin{equation}\label{eq:Ngamma0}
	 \bar{\mathcal{V}}^{2+}_{n}\left(\mathcal{A}_j, \mathcal{R}_{nN,k}, {h} \right)=
	 \left( 1+b\right) \bar{\mathcal{V}}^{3+}_{n}\left(\frac{\mathcal{A}_j}{1+b}, \mathcal{R}_{nN,k}, {h} \right).
\end{equation}
	\item If $0<\gamma\leq1$, then 
	\begin{equation}\label{eq:Ngamma01}
		\bar{\mathcal{V}}^{2+}_{n}\left(\mathcal{A}_{j},\mathcal{R}_{nN,k}, {h} \right)=
	\bar{\mathcal{V}}^{3+}_{n}\left(\max\left( \mathcal{A}_{j}-W_{n},0\right) ,\mathcal{R}_{nN,k}, {h} \right)+Y_{n}, \end{equation}
	with $W_{n}$ and $Y_{n}$ as in \eqref{eq:gamma01}.
	\item If $1<\gamma\leq2$, then 
	\begin{equation}\label{eq:Ngamma12}
		\bar{\mathcal{V}}^{2+}_{n}\left(\mathcal{A}_{j},\mathcal{R}_{nN,k}, {h} \right)=\left(2-\gamma \right) 
	\bar{\mathcal{V}}^{3+}_{n}\left(\frac{\max\left( \mathcal{A}_{j}-W_{n},0\right)}{\left(2-\gamma \right)},\mathcal{R}_{nN,k}, {h} \right)+Y_{n},
\end{equation}
	with $W_{n}$ and $Y_{n}$ as in \eqref{eq:gamma12}.
\end{itemize}
We stress out that in equations \eqref{eq:Ngamma0}, \eqref{eq:Ngamma01} and \eqref{eq:Ngamma12},  the post-withdrawal value of $A$ may  not be in the grid $\mathcal{G}_{A}$. In this case,   interpolation is used to compute $	\bar{\mathcal{V}}^{3+}_{n}$.
\item Pay the LTC (only if $n>0$). We term 
$ \bar{\mathcal{V}}^{1+}_{n}$ the contract value at anniversary $n$ before the payment of the LTC. Then
 $$ \bar{\mathcal{V}}^{1+}_{n}\left(\mathcal{A}_{j},\mathcal{R}_{nN,k}, {h} \right)=
  \bar{\mathcal{V}}^{2+}_{n}\left(\max \left( \mathcal{A}_{j}- L_{n}(h),0\right),\mathcal{R}_{nN,k}, {h} \right)+L_{n}(h).
  $$
\item Fees adjustment. We term 
  $ \bar{\mathcal{V}}^{-}_{n}$ the contract value at anniversary $n$ before fees are withdrawn. Then
  $$ \bar{\mathcal{V}}^{-}_{n}\left(\mathcal{A}_{j},\mathcal{R}_{nN,k}, {h} \right)=
  \bar{\mathcal{V}}^{1+}_{n}\left(\max \left( \mathcal{A}_{j}(1-\alpha)-\beta P,0\right),\mathcal{R}_{nN,k}, {h} \right).
  $$
\end{enumerate}
 Through the above equations, by moving backward in time, it is possible to calculate the price of the contract up to the initial time $t=0$. Specifically, the initial price $\mathcal{V}\left(P,P,r_0,M_0,0^- \right)$ is approximated by  $\bar{\mathcal{V}}_{0}^{-}\left(\mathcal{A}_{j_{\min}},\mathcal{R}_{0,0},M_0  \right)$.
  
 \begin{remark}
 	In the case of dynamic withdrawal, the identification of the optimal withdrawal strategy can be done by comparing, for different $\gamma$ values on a mesh from $\gamma=0$ to $\gamma=2$, the one that maximizes the overall value of the contract $\bar{\mathcal{V}}_n^{2+}$, which can be computed by equations $\eqref{eq:Ngamma0}$, $\eqref{eq:Ngamma01}$ or $\eqref{eq:Ngamma12}$.
 \end{remark}

	 \begin{remark}
	 	The previously described procedure, which is valid for the stochastic BS-CIR model, can be easily readapted to the Black and Scholes sub-model. Indeed, it will be sufficient to assume a constant interest rate.
	 \end{remark}
 
 \begin{remark}\label{Rml_alpha}
 	A common practice in the field of variable annuities is to calculate the value of the $\alpha$ parameter that makes the contract fair, that is $\bar{\mathcal{V}}_{0}^{-}\left(\mathcal{A}_{j_{\min}},\mathcal{R}_{0,0},M_0  \right)=P$. This calculation can be done easily by iterating the initial price calculation for different values of $\alpha$, based on an appropriate zero-search scheme, such as the secant method we employed.
 	 \end{remark}

\begin{remark}
\myrb{	The proposed evaluation technique has several advantages in itself: the number of nodes used at each time-step to discretise the continuous processes is bounded. In addition, the interpolation technique allows for simple and efficient handling of jumps in the account value due to withdrawal payments. Furthermore, various exercise strategies, such as the dynamic strategy, can be handled with very little computational effort. }
 \end{remark}

	\section{Numerical results}\label{NR}
	In this Section, we present the results of some numerical tests in which we test the evaluation procedure based on the Tree-LTC algorithm. Specifically, we calculate the fair value of the fee parameter $\alpha$ (see Remark \ref{Rml_alpha}) as certain parameters change, such as, for example, the age $x_0$ of the PH at inception or the withdrawal strategy.
	Under the static withdrawal strategy, we compare the fair fee $\alpha$ arising from the Tree-LTC against the ones obtained by implementation of a classical Monte Carlo method.
	Such tests are performed under the assumption of stochastic interest rates, but also within the Black-Scholes model framework. In this simpler setting, indeed, we are able to compare more closely the performance of our algorithm against the performance of the Monte Carlo method with control variates that is used by \cite{Hsieh2018} for the pricing of the considered contract. 
	
	In Table \ref{tab:param1}, we report a brief description of the parameters that characterize the contract and the underlying stochastic models, along with the respective symbols and the values we assigned to them within our numerical experiments.  We also point out that, for the tests performed with the Black-Scholes model, we assume $r=r_0=5\%$. Furthermore, the guaranteed minimum withdrawal rate $g$ is indexed to the age of the PH: $3\%$ for a 60-year-old with an increase of $0.1\%$ for each additional year of age. This choice is made to make the value of the contract more homogeneous: a person aged 80 has a shorter life expectancy than one aged 60, so to make the contract more equitable we need to increase the guaranteed amount for withdrawals. 
	
	\myrb{As far as the BS-CIR model is considered, we assume a negative value for the correlation parameter $\rho$, specifically $\rho=-0.25$. In fact, in financial markets, the relationship between stock market performance and interest rates can vary, but typically exhibits a negative correlation. However, in the following we will analyse different  values for $\rho$ and their effects on the value of the contract (see Figure \ref{fig:2}).}
	
	Before presenting the numerical results, we point out that both the Tree-LTC algorithm and the Monte Carlo methods are implemented in the C language and were run on the same machine (i5-1035G1 CPU processor, 8 GB RAM) in order to compare computation times.
	
 \begin{table}[h]
 	\begin{center}
 		\scalebox{0.95}{ %
 			\begin{tabular}{cllccll}
 				\toprule 
 				Symbol & Name & Value &  & Symbol & Name & Value\tabularnewline
 				\midrule
 				$P$ & initial account value & $100$ &  & $\alpha$ & fees proportional to $A$ & variable\tabularnewline
 				$\sigma_{F}$ & volatility of the fund & \myrb{$0.20$} &  & $\beta$ & fees proportional to $B$ & $0.003$\tabularnewline
 				$\sigma_{r}$ & volatility of the i.r. & \myrb{$0.10$} &  & $x_{0}$ & entry age & $60,65,70,75 $ or $80$\tabularnewline
 				$k_{r}$ & speed of mean rev & $0.5$ &  & $g$ & withdrawal rate & $0.03\!+\!(x_{0}\!-\!60)\!\cdot\!0.001$\tabularnewline
 				$\theta$ & long mean i.r. & $0.05$ &  & $c$ & LTC withdrawal rate & $0.06$ or $0.00$\tabularnewline
 				$r_{0}$ & initial interest rate & $0.05$ &  & $\pi$ & inflation protection & $0.05$\tabularnewline
 				$\rho$ & correlation & $-0.25$ &  & $b$ & bonus & $g+0.005$\tabularnewline
 				$M_{0}\left(0\right)$ & initial health state & $1$ &  & $\kappa_{n}$ & penalty for $\gamma_{n}>1$ & $0.01\cdot\max(0,8-t)$\tabularnewline
 				\bottomrule
 			\end{tabular}
 			
 	}\end{center}
 	
 	\caption{\label{tab:param1}Parameters employed for numerical experiments.}
 \end{table}

 \subsection{The Black-Scholes model}
 
 In this Subsection, we work in the framework of the Black-Scholes model for the description of the dynamics of the underlying fund, without any stochastic assumption about the interest rate. Indeed, \cite{Hsieh2018} consider a non-stochastic interest rate and a static withdrawal strategy and their   approach  to the contract evaluation relies on the Monte Carlo method with control variates (MC-CV). Accordingly, choosing the most simple setting for the interest rate and the withdrawal strategy allows us to preliminarly validate the Tree-LTC, by comparing its pricing performance against the one of the Monte Carlo and the MC-CV methods. Specifically, MC-CV is a Monte Carlo algorithm that exploits the following four control variates to reduce the variance of the results:
\begin{equation}\label{eq:CV}
	\begin{aligned}
	C_1&=A_\tau^{-}e^{-r\tau}-\mathbb{E}^\mathbb{Q}\left[ A_\tau^{-}e^{-r\tau}\right], \\
	C_2&=F_\tau-\mathbb{E}^\mathbb{Q}\left[F_\tau\right],\\
	C_3&=\sum_{n=1}^\tau\left(G_n+L_{n}\left(M(n)\right)\right)-\mathbb{E}^\mathbb{Q}\left[ \sum_{n=1}^\tau\left(G_t+L_{n}\left(M(n)\right)\right)\right],\\
	C_4&=\tau-\mathbb{E}^\mathbb{Q}\left[\tau\right].
\end{aligned}
\end{equation}
Please, observe that  $\tau$ is the anniversary year  immediately after   the PH's death.

First of all, we test the convergence of the three considered algorithms by changing the number of discretization steps. In particular, we consider four parameter configurations, denoted by the letters A, B, C and D, as shown in Table \ref{tab:BS0}. In particular, as far as the Monte Carlo algorithms are considered we report the number of simulations and the number of time discretization steps (in the Black-Scholes model, exact simulation is possible, so we always consider only one step per year). As far as the Tree-LTC algorithm is employed, we report first the number $N$ of time steps per year, and then the factor $f_A$ which is used to compute $A_{\min}$ and $A_{\max}$ as $A_{\min}=P/f_A$ and $A_{\max}=P\cdot f_A$. 

Convergence results are displayed in Table \ref{tab:BS1}. Specifically, we compute the fair value of $\alpha$ for a PH with entry age $x_0=60$, by changing the parameter setup. Moreover, we consider both  a GLWB product that includes a LTC guarantee amounting to 6\% of the inflation-indexed benefit base, i.e., $c=0.06$, and, for comparison purposes, a GLWB product that does not include a LTC guarantee, i.e., $c=0$ (traditional GLWB annuity). \myrb{The findings outlined in Table \ref{tab:BS1} indicate that the point estimate of \(\alpha\) obtained via the Tree-LTC method lies within the confidence intervals established by the first two Monte Carlo methods. This consistency underscores the compatibility of the three numerical techniques in determining the estimated values of \(\alpha\).
} The MC-CV method turns out to be more effective than the MC method: the confidence interval amplitudes are smaller for approximately the same computational time. The results produced by the Tree-LTC method are much more stable than the results related to the other two methods, and the computational times are significantly shorter.
For all the considered numerical methods, setup D, the most accurate by far, was also used in the other numerical tests, presented below.

In Table \ref{tab:BS2}, we show the fair values of $\alpha$ at different entry ages for the PH, being five years apart. Also in this case, we consider both the case where the LTC payment is provided and the case where no LTC benefit is granted. We see that the outcomes of the tree numerical methods point to the same pattern of the fair value of $\alpha$ as age increases. Furthermore, we remark that embedding the LTC component increases the fair value of $\alpha$, but to a small extent, never exceeding $120$ basis points of the value of the traditional GLWB annuity. 
In Table \ref{tab:BS3}, we show the fair values of $\alpha$ under the same BS assumption for the dynamics of the mutual fund, but under different cases for the withdrawal strategy, either static, or mixed or dynamic or full dynamic. This implies the exclusive use of the Tree-LTC algorithm. Indeed,  according to the previous evidence,  the proposed algorithm turns out  to be the most accurate and the fastest  among the competing methods. Furthermore, among the considered numerical methods, the Tree-LTC is the only one able to tackle, in a straightforward way, the stochastic control problem involved by the the dynamic withdrawal strategy.
The age being fixed, the more numerous the withdrawal options for the PH the higher the fair value of the fee $\alpha$. Nevertheless, such an increase in $\alpha$ appears modest and the total cost never exceeds $250$ basis points.

  To conclude this battery of tests, we investigate what impact the LTC and the withdrawal strategy have on the initial contract price. Table \ref{tab:BS4} shows the prices, calculated using the Tree-LTC method, of the GLWB-LTC contract as the age of the PH, the withdrawal strategy and the amount of the LTC change. In the cases considered here, for each value of $x_0$, $\alpha$ is set equal to the fair value in the case of static withdrawal, for $c=0\%$. For this reason, the price in the seventh column of Table \ref{tab:BS4} is always $100.00$. More generally, the prices for $c=0$ are all close to $100$. We then observe that the values for all the strategies with respect to $c=6\%$ are greater than $100$, as to be expected, but never exceed $10$ monetary units with respect to the relative cases for $c=0$.  This cost is not very large if one takes into account that LTC significantly increases the guaranteed minimum payment in the case of disability. This small price difference may be attractive to buyers, incentivizing them to purchase policies with LTC.

 \begin{table}[h] 
 	\begin{center}
 		\scalebox{1}{ 
 			
 			\begin{tabular}{ccccc}
 				\toprule 
 				Setup &  & MC & MC-CV & Tree-LTC\tabularnewline
 				\midrule
 				{\small{}$A$} &  & $1\cdot10^{6},1$ & $1\cdot10^{6},1$ & $100,100$\tabularnewline
 				{\small{}$B$} &  & $2\cdot10^{6},1$ & $2\cdot10^{6},1$ & $200,200$\tabularnewline
 				{\small{}$C$} &  & $4\cdot10^{6},1$ & $4\cdot10^{6},1$ & $400,400$\tabularnewline
 				{\small{}$D$} &  & $8\cdot10^{6},1$ & $8\cdot10^{6},1$ & $800,800$\tabularnewline
 				\bottomrule
 			\end{tabular}
 			
 	}\end{center}
 	
 	\caption{\label{tab:BS0} Parameter setup for the numerical algorithms when the Black-Scholes model is considered.}
 \end{table}
 \begin{table}[h]
 	\begin{center}
 		\scalebox{0.95}{ 
\begin{tabular}{ccccccccc}
	\toprule 
	  &  & \multicolumn{3}{c}{$c=0.06$} &  & \multicolumn{3}{c}{$c=0$}\tabularnewline
	\cmidrule{3-9} \cmidrule{4-9} \cmidrule{5-9} \cmidrule{6-9} \cmidrule{7-9} \cmidrule{8-9} \cmidrule{9-9} 
	Setup &  & MC & MC-CV & Tree-LTC &  & MC & MC-CV & Tree-LTC\tabularnewline
	\midrule
	{\small{}$A$} &  & $\underset{\left(34\right)}{154.70\pm1.60}$ & $\underset{\left(31\right)}{154.71\pm0.66}$ & $\underset{\left(0.3\right)}{154.37}$ &  & $\underset{\left(27\right)}{55.07\pm1.35}$ & $\underset{\left(27\right)}{54.95\pm0.45}$ & $\underset{\left(0.2\right)}{54.62}$\tabularnewline
	{\small{}$B$} &  & $\underset{\left(53\right)}{154.76\pm1.13}$ & $\underset{\left(66\right)}{154.36\pm0.46}$ & $\underset{\left(0.6\right)}{154.44}$ &  & $\underset{\left(61\right)}{55.27\pm0.95}$ & $\underset{\left(53\right)}{54.67\pm0.32}$ & $\underset{\left(0.4\right)}{54.76}$\tabularnewline
	{\small{}$C$} &  & $\underset{\left(128\right)}{154.84\pm0.80}$ & $\underset{\left(111\right)}{154.30\pm0.33}$ & $\underset{\left(0.8\right)}{154.46}$ &  & $\underset{\left(113\right)}{55.25\pm0.67}$ & $\underset{\left(107\right)}{54.64\pm0.22}$ & $\underset{\left(1.0\right)}{54.80}$\tabularnewline
	{\small{}$D$} &  & $\underset{\left(212\right)}{154.83\pm0.56}$ & $\underset{\left(213\right)}{154.49\pm0.23}$ & $\underset{\left(1.7\right)}{154.47}$ &  & $\underset{\left(229\right)}{55.12\pm0.48}$ & $\underset{\left(224\right)}{54.76\pm0.16}$ & $\underset{\left(1.8\right)}{54.81}$\tabularnewline
	\bottomrule
\end{tabular}
 			
 	}\end{center}
 	
 	\caption{\label{tab:BS1}The fair values of $\alpha$ (in basis points), in the Black-Scholes model, by changing the numerical setup and by assuming  the presence ($c=6\%$) or absence ($c=0$) of LTC. The values in parentheses indicate computational time in seconds.}
 \end{table}
\begin{table}[h]
	\begin{center}
		\scalebox{0.95}{ %
\begin{tabular}{ccccccccc}
	\toprule 
	Entry &   & \multicolumn{3}{c}{$c=0.06$} &  & \multicolumn{3}{c}{$c=0$}\tabularnewline
	\cmidrule{3-9} \cmidrule{4-9} \cmidrule{5-9} \cmidrule{6-9} \cmidrule{7-9} \cmidrule{8-9} \cmidrule{9-9} 
	age &  & MC & MC-CV & Tree-LTC &  & MC & MC-CV & Tree-LTC\tabularnewline
	\midrule
	{\small{}$60$} &  & $\underset{\left(228\right)}{154.65\pm0.56}$ & $\underset{\left(233\right)}{154.49\pm0.23}$ & $\underset{\left(0.8\right)}{154.46}$ &  & $\underset{\left(229\right)}{55.12\pm0.48}$ & $\underset{\left(224\right)}{54.76\pm0.16}$ & $\underset{\left(0.9\right)}{54.80}$\tabularnewline
	{\small{}$65$} &  & $\underset{\left(165\right)}{166.68\pm0.61}$ & $\underset{\left(170\right)}{166.77\pm0.25}$ & $\underset{\left(0.9\right)}{166.86}$ &  & $\underset{\left(155\right)}{55.18\pm0.50}$ & $\underset{\left(182\right)}{55.28\pm0.17}$ & $\underset{\left(0.7\right)}{55.36}$\tabularnewline
	{\small{}$70$} &  & $\underset{\left(144\right)}{166.55\pm0.63}$ & $\underset{\left(174\right)}{166.73\pm0.25}$ & $\underset{\left(0.8\right)}{166.80}$ &  & $\underset{\left(158\right)}{48.89\pm0.52}$ & $\underset{\left(148\right)}{49.13\pm0.17}$ & $\underset{\left(0.6\right)}{49.13}$\tabularnewline
	{\small{}$75$} &  & $\underset{\left(128\right)}{156.88\pm0.65}$ & $\underset{\left(150\right)}{156.94\pm0.24}$ & $\underset{\left(0.6\right)}{156.93}$ &  & $\underset{\left(138\right)}{38.14\pm0.54}$ & $\underset{\left(133\right)}{38.25\pm0.15}$ & $\underset{\left(0.5\right)}{38.24}$\tabularnewline
	{\small{}$80$} &  & $\underset{\left(112\right)}{140.67\pm0.67}$ & $\underset{\left(175\right)}{140.40\pm0.22}$ & $\underset{\left(0.6\right)}{140.27}$ &  & $\underset{\left(130\right)}{25.11\pm0.56}$ & $\underset{\left(124\right)}{25.12\pm0.12}$ & $\underset{\left(0.5\right)}{25.04}$\tabularnewline
	\bottomrule
\end{tabular}

	}\end{center}
	
	\caption{\label{tab:BS2}The fair values of $\alpha$ (in basis points), in the Black-Scholes model, by changing the entry age of the PH and by assuming  the presence ($c=6\%$) or absence ($c=0$) of LTC. The values in parentheses indicate computational time in seconds.}
\end{table}
\begin{table}[h]
	\begin{center}
		\scalebox{1}{ 
			
\begin{tabular}{ccccccccccc} 
	\toprule 
	 &  & \multicolumn{4}{c}{$c=0.06$} & \multicolumn{1}{c}{} & \multicolumn{4}{c}{$c=0$}\tabularnewline
	Entry &  & \multicolumn{4}{c}{strategy:} &  & \multicolumn{4}{c}{strategy:}\tabularnewline
	age &  & static & mixed & dynamic & full dyn &  & static & mixed & dynamic & full dyn\tabularnewline
	\midrule
	{\small{}$60$} &  & $154.46$ & $217.02$ & $229.62$ & $244.55$ &  & $54.80$ & $82.14$ & $85.74$ & $88.06$\tabularnewline
	{\small{}$65$} &  & $166.86$ & $211.19$ & $222.85$ & $233.86$ &  & $55.36$ & $74.64$ & $77.63$ & $79.33$\tabularnewline
	{\small{}$70$} &  & $166.80$ & $195.30$ & $205.67$ & $212.76$ &  & $49.13$ & $61.33$ & $63.62$ & $64.76$\tabularnewline
	{\small{}$75$} &  & $156.93$ & $173.33$ & $182.20$ & $186.09$ &  & $38.24$ & $45.06$ & $46.70$ & $47.38$\tabularnewline
	{\small{}$80$} &  & $140.27$ & $148.44$ & $155.74$ & $157.49$ &  & $25.04$ & $28.28$ & $29.38$ & $29.74$\tabularnewline
	\bottomrule
\end{tabular}
			
	}\end{center}
	
	\caption{\label{tab:BS3}The fair values of $\alpha$ (in basis points), in the Black-Scholes model, by changing
		the entry age of the PH, the presence ($c=6\%$) or absence ($c=0$)
		of LTC, and the withdrawal strategy (static, mixed, dynamic or full dynamic). The values in parentheses indicate
		computational time in seconds.}
\end{table}

\begin{table}[h]
	\begin{center}
		\resizebox{\textwidth}{!}{ 
\setlength{\tabcolsep}{5pt}			
\begin{tabular}{cccccccccccc}
	\toprule 
	Entry &  &  & \multicolumn{4}{c}{$c=0.06$} & \multicolumn{1}{c}{} & \multicolumn{4}{c}{$c=0$}\tabularnewline
	\cmidrule{4-7} \cmidrule{5-7} \cmidrule{6-7} \cmidrule{7-7} \cmidrule{9-12} \cmidrule{10-12} \cmidrule{11-12} \cmidrule{12-12} 
	age & $\alpha$ & & static & mixed & dynamic & full dyn &  & static & mixed & dynamic & full dyn\tabularnewline
	\midrule
	{\small{}$60$} & $54.80$ &  & $108.11$ & $109.20$ & $111.70$ & $111.84$ &  & $100.00$ & $101.94$ & $102.41$ & $102.56$\tabularnewline
	{\small{}$65$} & $55.36$ &  & $107.56$ & $108.22$ & $110.42$ & $110.52$ &  & $100.00$ & $101.27$ & $101.60$ & $101.71$\tabularnewline
	{\small{}$70$} & $49.13$ &  & $106.97$ & $107.32$ & $109.20$ & $109.26$ &  & $100.00$ & $100.75$ & $100.97$ & $101.04$\tabularnewline
	{\small{}$75$} & $38.24$ &  & $106.31$ & $106.47$ & $108.04$ & $108.07$ &  & $100.00$ & $100.40$ & $100.53$ & $100.57$\tabularnewline
	{\small{}$80$} & $25.04$ &  & $105.57$ & $105.63$ & $106.87$ & $106.89$ &  & $100.00$ & $100.18$ & $100.25$ & $100.27$\tabularnewline
	\bottomrule
\end{tabular}
			
	}\end{center}
	
	\caption{\label{tab:BS4}The contract price  at time $t=0$ in the Black-Scholes model, when
		$\alpha$ is set as the fair value for $c=0$ and for the static
		withdrawal strategy.}
\end{table}

  \FloatBarrier
 \subsection{The Black-Scholes CIR model}
We enrich our discussion, by assuming that the mutual fund evolves according to the Black-Scholes CIR model, namely by adding to the previous modelling setting a stochastic representation of the underlying short interest rate. In this respect, it is not possible to use the Monte Carlo control variates technique in \cite{Hsieh2018}, since there are no closed formulas for the expected values in \eqref{eq:CV} (with the only exception of $\mathbb{E}^{\mathbb{Q}}\left[\tau \right]$). Therefore, we test the performance of the Tree-LTC only against the standard Monte Carlo method. 
 
 Also in this model, we begin by testing the convergence of the Tree-LTC algorithm by comparing it with a standard Monte Carlo method. Again, we consider four numerical configurations, that are defined by the numerical setups  provided in Table \ref{tab:CIR0}, with the same arrangement used for the BS model: as far as the MC method is considered, we report the number of simulations and then the number of simulation time steps per year. As far as the Tree-LTC method is considered, we report the  number of time steps and the factor $f_A$. \myrb{In this model, given that the valuation of the fair value of $\alpha$ requires a greater computational effort, in order to speed up the contract valuation procedure, for both numerical methods,  we first perform a rough estimation of fair $\alpha$ using configuration A, and then run the procedure around this approximation using the reference configuration (B, C or D).} The fair values of $\alpha$, computed with respect to the four setups, are reported in Table \ref{tab:CIR1}. The results obtained here have similar characteristics to those obtained in the Black-Scholes model: both models produce compatible results, but the Tree-LTC method produces more stable results in less computational time.  
Table \ref{tab:CIR2} presents the outcomes as the PH's age varies, and whether LTC is included or not. The two numerical methods deliver very similar results about the age pattern of the fair value of $\alpha$.
When considering the impact of the PH's starting age, it is evident that  making the withdrawal rate $g$  vary with the initial age $x_0$ of the insured results in fair values of $\alpha$ that are closely aligned across the considered ages.
Furthermore, when comparing the cases for $c=6\%$ and $c=0$, it is clear that the addition of LTC to the insurance policy does lead to a higher fair value for $\alpha$. However, in the examined scenario, this increase never surpasses $130$ basis points, a value that is generally considered acceptable. 
 
\myrb{We deepen our analysis on the fair value of $\alpha$ by assessing how it changes with several factors changing: the PH's initial age, the withdrawal strategy (static, mixed, dynamic and full dynamic), and the fund  volatility $\sigma_F$. We report the results in Table \ref{tab:CIR4}. We can observe that the fair value of $\alpha$ increases when considering a withdrawal strategy with a wider range of withdrawal possibilities. Moreover, the inclusion of advanced withdrawal strategies does not  penalize the computational cost of the algorithm: when switching from the static strategy to the full dynamic strategy, in most cases, the computational times do not significantly increase. We also observe that the fair value of $\alpha$ increases as  $\sigma_F$ increases. The increase does not depend much on the initial age of the insured. Instead, it is more sensitive to the withdrawal strategy:  the greater the volatility, the greater the opportunities to exploit a flexible strategy efficiently. This can have implications   on the attractiveness of the insurance products due to higher contract value and thus higher contract fees. For this reason, the use of volatility risk mitigation techniques to limit $\sigma_F$ (see, for example, \cite{Berardi2024}) can be an effective choice.}
 
\myrb{As a further step in our analysis, we assess also how the fair value of $\alpha$ changes with the volatility of the interest rate, namely $\sigma_r$. We report the outcomes in Table \ref{tab:CIR3}, where we show four possible values for $\sigma_r$.   We can notice that the fair values of $\alpha$ for $\sigma_r=0.001$ are very close to those for the BS model, reported in Table \ref{tab:BS2}, since for $\sigma_r=0$ the BS-CIR model reduces to the nested BS model.
Moreover, for this particular case, the computational times are higher than usual: this is due to the fact that the smaller $\sigma_r$ is, the greater the number of nodes of the interest rate tree between $r_0$ and zero. However, in line with standard practice, we report this very particular parameter setting only for comparison purposes between the BS-CIR and the BS models, since it is not interesting from a practical point of view when the stochastic interest rate is considered.
Secondly, we observe that, as $\sigma_r$ increases, the fair value of $\alpha$ initially tends to decrease slightly, and then to increase. This particular dynamics is related to the specific choice of a negative correlation coefficient $\rho$, that has been set equal to $-0.25$. Such a choice  is  consistent with market observations, as previously explained in the beginning of Section \ref{NR}. 
In this respect, we study the effect of correlation coefficient $\rho$ on the fair value of $\alpha$. Figure \ref{fig:2} represents the fair value of $\alpha$ as a function of $\sigma_r$ for different values of the correlation parameter $\rho$, when a static withdrawal strategy is employed.  We emphasize that the values used to generate this Figure was calculated using the Tree-LTC algorithm, but, as a robustness check, they were also validated by the Monte Carlo method. As it can be seen, when the correlation parameter  is negative, the fair value of $\alpha$ (and thus the price of the contract) initially tends to decrease and then grows, whereas when $\rho$ is positive there is only growth. This aspect is important when selecting the fund to which to link the policy, preferring funds that are negatively correlated with interest rate trends.
 }
  
As a further analysis, we assess what the optimal withdrawal strategy should be under different assumptions about the amount of the account value, the interest rate, the health status of the PH and the time of contract evaluation. The results are presented in the Figure \ref{fig:1}. In this Figure, we display the dynamic optimal withdrawal strategy for a GLWB-LTC contract, as a function of the account value
$A^{2+}_n$ (x-axis) and of the interest rate $r_n$ (y-axis), varying the anniversary $n$ and the health status of the PH, $M_n$. The green region denotes the points for which it is convenient not to withdraw ($\gamma_{n} = 0$), the white region for which it is convenient to withdraw at the guaranteed minimum rate ($\gamma_{n} = 1$), and the orange region for which it is convenient to terminate the contract ($\gamma_{n} = 2$).
 Looking at the various cases analysed, we can see that when the account value takes high values, surrendering is the best choice: the cost of fees is not worth the insurance coverage provided by the contract. When analysing the effect of interest rates, we notice that the higher the interest rate, the more convenient the choice of surrender option. On the other hand, when the interest rate is low and the account value takes on values close to the initial premium, the most convenient choice is not to withdraw and thus to let the benefit base increase in its value. This implies to reserve a higher LTC payment in case of disability at the subsequent anniversaries and/or a higher withdrawal. This aspect emphasizes an important advantage of flexible withdrawal strategies, especially in relation to insurance policies offering protection from health risks. Indeed, the PH is given more choice about how much to save for protection from the possible disability states at future times. When analyzing the optimal withdrawal strategy with respect to time, we notice that as the PH grows old, the most convenient is for her to withdraw at the minimum guaranteed rate and thus to undertake a decumulation strategy as the component of disability protection becomes more and more important. The passing of time has a further effect: in the early years it is less convenient to terminate the contract early because of the  cost charged for withdrawals beyond the guaranteed minimum. After seven years this penalty disappears and it is then more convenient to surrender. Finally, when the PH is very old, the most convenient choice is a standard withdrawal, almost always: it is neither worthwhile to give up a withdrawal (the cost of giving up does not pay off over time) nor to terminate the contract (the PH loses the insurance coverage).
 \begin{table}[h] 
 	\begin{center}
 		\scalebox{1}{  
 			\begin{tabular}{cccc}
 				\toprule
 				Setup &  & MC & Tree-LTC\tabularnewline
 				\midrule
 				{\small{}$A$} &  & $1\cdot10^{6},\phantom{1}25$ & $\phantom{1}25,100$\tabularnewline
 				{\small{}$B$} &  & $2\cdot10^{6},\phantom{1}50$ & $\phantom{1}50,200$\tabularnewline
 				{\small{}$C$} &  & $4\cdot10^{6},100$ & $100,400$\tabularnewline
 				{\small{}$D$} &  & $8\cdot10^{6},200$ & $200,800$\tabularnewline
 				\bottomrule
 			\end{tabular}
 		}
 			\caption{\label{tab:CIR0}Parameter setup for the numerical algorithms when the BS-CIR model is considered.}
  \end{center}
\end{table}
   
  \begin{table}[h] 
 	\begin{center} 
 	\scalebox{1}{ 
 		
\begin{tabular}{ccccccc}
	\toprule 
	&  & \multicolumn{2}{c}{$c=0.06$} &  & \multicolumn{2}{c}{$c=0.00$}\tabularnewline
	\cmidrule{3-7} \cmidrule{4-7} \cmidrule{5-7} \cmidrule{6-7} \cmidrule{7-7} 
	Setup &  & MC & Tree-LTC &  & MC & Tree-LTC\tabularnewline
	\midrule
	{\small{}$A$} &  & $\underset{\left(32\right)}{157.16\pm3.56}$ & $\underset{\left(3\right)}{159.24}$ &  & $\underset{\left(32\right)}{53.47\pm3.05}$ & $\underset{\left(4\right)}{54.61}$\tabularnewline
	{\small{}$B$} &  & $\underset{\left(151\right)}{159.14\pm2.63}$ & $\underset{\left(14\right)}{159.39}$ &  & $\underset{\left(150\right)}{53.58\pm2.24}$ & $\underset{\left(15\right)}{54.92}$\tabularnewline
	{\small{}$C$} &  & $\underset{\left(699\right)}{158.76\pm1.85}$ & $\underset{\left(81\right)}{159.44}$ &  & $\underset{\left(500\right)}{55.03\pm1.54}$ & $\underset{\left(75\right)}{55.00}$\tabularnewline
	{\small{}$D$} &  & $\underset{\left(1804\right)}{159.84\pm1.30}$ & $\underset{\left(431\right)}{159.45}$ &  & $\underset{\left(1508\right)}{55.17\pm1.04}$ & $\underset{\left(379\right)}{55.02}$\tabularnewline
	\bottomrule
\end{tabular} }\end{center}
 	\caption{\label{tab:CIR1}  The fair values of $\alpha$ (in basis points), in the BS-CIR model, by changing the numerical setup and by assuming  the presence ($c=6\%$) or absence ($c=0$) of LTC. The values in parentheses indicate computational time in seconds.}
\end{table}
\begin{table}[h]
	\begin{center}
		\scalebox{1}{ %
\begin{tabular}{ccccccc}
	\toprule 
	Entry &  & \multicolumn{2}{c}{$c=0.06$} &  & \multicolumn{2}{c}{$c=0.00$}\tabularnewline
	\cmidrule{3-7} \cmidrule{4-7} \cmidrule{5-7} \cmidrule{6-7} \cmidrule{7-7} 
	age &  & MC & Tree-LTC &  & MC & Tree-LTC\tabularnewline
	\midrule
	{\small{}$60$} &  & $\underset{\left(1804\right)}{159.84\pm1.30}$ & $\underset{\left(81\right)}{159.64}$ &  & $\underset{\left(1885\right)}{55.81\pm1.09}$ & $\underset{\left(75\right)}{55.00}$\tabularnewline
	{\small{}$65$} &  & $\underset{\left(1736\right)}{170.10\pm1.40}$ & $\underset{\left(60\right)}{170.98}$ &  & $\underset{\left(1706\right)}{54.49\pm1.12}$ & $\underset{\left(61\right)}{55.25}$\tabularnewline
	{\small{}$70$} &  & $\underset{\left(2251\right)}{169.89\pm1.45}$ & $\underset{\left(55\right)}{169.63}$ &  & $\underset{\left(1581\right)}{48.62\pm1.17}$ & $\underset{\left(55\right)}{48.64}$\tabularnewline
	{\small{}$75$} &  & $\underset{\left(2042\right)}{159.14\pm1.49}$ & $\underset{\left(50\right)}{158.45}$ &  & $\underset{\left(1478\right)}{38.06\pm1.21}$ & $\underset{\left(52\right)}{37.43}$\tabularnewline
	{\small{}$80$} &  & $\underset{\left(1285\right)}{140.32\pm1.51}$ & $\underset{\left(45\right)}{140.70}$ &  & $\underset{\left(1277\right)}{23.97\pm1.26}$ & $\underset{\left(44\right)}{24.06}$\tabularnewline
	\bottomrule
\end{tabular}
			
	}\end{center}
	
	\caption{ \label{tab:CIR2}   The fair values of $\alpha$ (in basis points),  in the BS-CIR model, by changing the entry age of the PH and by assuming  the presence ($c=6\%$)
		or absence ($c=0$) of LTC. The values in parentheses indicate computational
		time in seconds.}
\end{table}

\begin{table}[h]
	\begin{center}
		\scalebox{0.8}{ 
			\setlength{\tabcolsep}{3pt} 
			\begin{tabular}{cccccccccccccccc}
				\toprule 
				Entry &  & \multicolumn{4}{c}{BS CIR-with $\sigma_{F}=0.15$} &  & \multicolumn{4}{c}{BS CIR-with $\sigma_{F}=0.20$} &  & \multicolumn{4}{c}{BS CIR-with $\sigma_{F}=0.25$}\tabularnewline
				age &  & static & mixed & dynamic & full dyn &  & static & mixed & dynamic & full dyn &  & static & mixed & dynamic & full dyn\tabularnewline
				\midrule
				{\small{}$60$} &  & $\underset{\left(89\right)}{121.61}$ & $\underset{\left(110\right)}{150.88}$ & $\underset{\left(134\right)}{162.81}$ & $\underset{\left(143\right)}{168.43}$ &  & $\underset{\left(81\right)}{159.64}$ & $\underset{\left(80\right)}{220.69}$ & $\underset{\left(111\right)}{234.38}$ & $\underset{\left(77\right)}{249.25}$ &  & $\underset{\left(51\right)}{198.98}$ & $\underset{\left(72\right)}{305.88}$ & $\underset{\left(86\right)}{321.49}$ & $\underset{\left(84\right)}{353.65}$\tabularnewline
				{\small{}$65$} &  & $\underset{\left(80\right)}{133.88}$ & $\underset{\left(104\right)}{153.75}$ & $\underset{\left(125\right)}{165.03}$ & $\underset{\left(132\right)}{169.18}$ &  & $\underset{\left(60\right)}{170.98}$ & $\underset{\left(394\right)}{213.79}$ & $\underset{\left(96\right)}{226.45}$ & $\underset{\left(112\right)}{237.26}$ &  & $\underset{\left(47\right)}{210.45}$ & $\underset{\left(66\right)}{286.56}$ & $\underset{\left(80\right)}{300.81}$ & $\underset{\left(85\right)}{324.15}$\tabularnewline
				{\small{}$70$} &  & $\underset{\left(74\right)}{134.09}$ & $\underset{\left(95\right)}{146.09}$ & $\underset{\left(114\right)}{156.19}$ & $\underset{\left(120\right)}{158.79}$ &  & $\underset{\left(55\right)}{169.63}$ & $\underset{\left(68\right)}{196.81}$ & $\underset{\left(87\right)}{208.03}$ & $\underset{\left(101\right)}{214.90}$ &  & $\underset{\left(43\right)}{208.09}$ & $\underset{\left(60\right)}{258.22}$ & $\underset{\left(75\right)}{270.83}$ & $\underset{\left(73\right)}{286.12}$\tabularnewline
				{\small{}$75$} &  & $\underset{\left(66\right)}{124.94}$ & $\underset{\left(81\right)}{131.39}$ & $\underset{\left(104\right)}{140.03}$ & $\underset{\left(107\right)}{141.42}$ &  & $\underset{\left(50\right)}{158.45}$ & $\underset{\left(71\right)}{173.91}$ & $\underset{\left(85\right)}{183.48}$ & $\underset{\left(91\right)}{187.21}$ &  & $\underset{\left(39\right)}{195.35}$ & $\underset{\left(54\right)}{225.34}$ & $\underset{\left(63\right)}{236.13}$ & $\underset{\left(64\right)}{244.91}$\tabularnewline
				{\small{}$80$} &  & $\underset{\left(58\right)}{109.63}$ & $\underset{\left(73\right)}{112.68}$ & $\underset{\left(91\right)}{119.81}$ & $\underset{\left(101\right)}{120.48}$ &  & $\underset{\left(45\right)}{140.70}$ & $\underset{\left(58\right)}{148.31}$ & $\underset{\left(77\right)}{156.18}$ & $\underset{\left(87\right)}{157.83}$ &  & $\underset{\left(35\right)}{175.50}$ & $\underset{\left(47\right)}{191.08}$ & $\underset{\left(52\right)}{199.98}$ & $\underset{\left(57\right)}{204.06}$\tabularnewline
				\bottomrule
			\end{tabular}
	}\end{center}
	
	\caption{\label{tab:CIR4}  The fair values of $\alpha$ (in basis points), in the BS-CIR model, for a GLWB-LTC with $c=6\%$,  by changing
		the entry age of the PH, the withdrawal strategy and the volatility $\sigma_F$ of the fund. The values in parentheses indicate
		computational time in seconds.}
\end{table}

\begin{table}[h]
	\begin{center}
		\scalebox{0.9}{  
		\begin{tabular}{ccccccccccc}
			\toprule 
			Entry &  & \multicolumn{4}{c}{BS CIR-with $\sigma_{r}=0.001$} &  & \multicolumn{4}{c}{BS CIR-with $\sigma_{r}=0.05$}\tabularnewline
			age &  & static & mixed & dynamic & full dyn &  & static & mixed & dynamic & full dyn\tabularnewline
			\midrule
			{\small{}$60$} &  & $\underset{\left(801\right)}{154.35}$ & $\underset{\left(1044\right)}{216.78}$ & $\underset{\left(1308\right)}{229.35}$ & $\underset{\left(993\right)}{244.12}$ &  & $\underset{\left(66\right)}{153.72}$ & $\underset{\left(84\right)}{213.91}$ & $\underset{\left(85\right)}{226.70}$ & $\underset{\left(86\right)}{240.84}$\tabularnewline
			{\small{}$65$} &  & $\underset{\left(801\right)}{166.75}$ & $\underset{\left(959\right)}{210.99}$ & $\underset{\left(1201\right)}{222.64}$ & $\underset{\left(1265\right)}{233.52}$ &  & $\underset{\left(60\right)}{165.97}$ & $\underset{\left(74\right)}{208.45}$ & $\underset{\left(88\right)}{220.30}$ & $\underset{\left(98\right)}{230.71}$\tabularnewline
			{\small{}$70$} &  & $\underset{\left(814\right)}{166.70}$ & $\underset{\left(885\right)}{195.14}$ & $\underset{\left(1056\right)}{205.49}$ & $\underset{\left(1094\right)}{212.50}$ &  & $\underset{\left(60\right)}{165.69}$ & $\underset{\left(68\right)}{192.87}$ & $\underset{\left(82\right)}{203.42}$ & $\underset{\left(96\right)}{210.10}$\tabularnewline
			{\small{}$75$} &  & $\underset{\left(761\right)}{156.84}$ & $\underset{\left(793\right)}{173.20}$ & $\underset{\left(953\right)}{182.06}$ & $\underset{\left(996\right)}{185.91}$ &  & $\underset{\left(52\right)}{155.62}$ & $\underset{\left(64\right)}{171.19}$ & $\underset{\left(74\right)}{180.22}$ & $\underset{\left(86\right)}{183.87}$\tabularnewline
			{\small{}$80$} &  & $\underset{\left(577\right)}{140.19}$ & $\underset{\left(715\right)}{148.34}$ & $\underset{\left(859\right)}{155.63}$ & $\underset{\left(885\right)}{157.37}$ &  & $\underset{\left(45\right)}{138.84}$ & $\underset{\left(54\right)}{146.56}$ & $\underset{\left(65\right)}{154.00}$ & $\underset{\left(115\right)}{155.65}$\tabularnewline
			\midrule
			&  &  &  &  &  &  &  &  &  & \tabularnewline
			&  & \multicolumn{4}{c}{BS CIR-with $\sigma_{r}=0.10$} &  & \multicolumn{4}{c}{BS-CIR with $\sigma_{r}=0.15$}\tabularnewline
			&  & static & mixed & dynamic & full dyn &  & static & mixed & dynamic & full dyn\tabularnewline
			\midrule
			{\small{}$60$} &  & $\underset{\left(81\right)}{159.64}$ & $\underset{\left(80\right)}{220.69}$ & $\underset{\left(111\right)}{234.38}$ & $\underset{\left(77\right)}{249.25}$ &  & $\underset{\left(63\right)}{171.59}$ & $\underset{\left(105\right)}{237.22}$ & $\underset{\left(110\right)}{252.52}$ & $\underset{\left(113\right)}{269.54}$\tabularnewline
			{\small{}$65$} &  & $\underset{\left(60\right)}{170.98}$ & $\underset{\left(394\right)}{213.79}$ & $\underset{\left(96\right)}{226.45}$ & $\underset{\left(112\right)}{237.26}$ &  & $\underset{\left(58\right)}{181.77}$ & $\underset{\left(88\right)}{226.99}$ & $\underset{\left(107\right)}{241.01}$ & $\underset{\left(103\right)}{253.12}$\tabularnewline
			{\small{}$70$} &  & $\underset{\left(55\right)}{169.63}$ & $\underset{\left(68\right)}{196.81}$ & $\underset{\left(87\right)}{208.03}$ & $\underset{\left(101\right)}{214.90}$ &  & $\underset{\left(57\right)}{178.42}$ & $\underset{\left(76\right)}{206.84}$ & $\underset{\left(96\right)}{219.18}$ & $\underset{\left(94\right)}{226.76}$\tabularnewline
			{\small{}$75$} &  & $\underset{\left(50\right)}{158.45}$ & $\underset{\left(71\right)}{173.91}$ & $\underset{\left(85\right)}{183.48}$ & $\underset{\left(91\right)}{187.21}$ &  & $\underset{\left(52\right)}{165.21}$ & $\underset{\left(52\right)}{181.25}$ & $\underset{\left(83\right)}{191.70}$ & $\underset{\left(83\right)}{195.77}$\tabularnewline
			{\small{}$80$} &  & $\underset{\left(45\right)}{140.70}$ & $\underset{\left(58\right)}{148.31}$ & $\underset{\left(77\right)}{156.18}$ & $\underset{\left(87\right)}{157.83}$ &  & $\underset{\left(46\right)}{145.67}$ & $\underset{\left(61\right)}{153.50}$ & $\underset{\left(73\right)}{162.04}$ & $\underset{\left(75\right)}{163.81}$\tabularnewline
			\bottomrule
		\end{tabular}
			
	}\end{center}
	
		\caption{\label{tab:CIR3}  The fair values of $\alpha$ (in basis points), in the BS-CIR model, for a GLWB-LTC with $c=6\%$,  by changing
		the entry age of the PH, the withdrawal strategy and the volatility $\sigma_r$ of the interest rate. The values in parentheses indicate
		computational time in seconds.}
\end{table}

\begin{figure}[h]
	\begin{center}
		\includegraphics[width=0.7\textwidth]{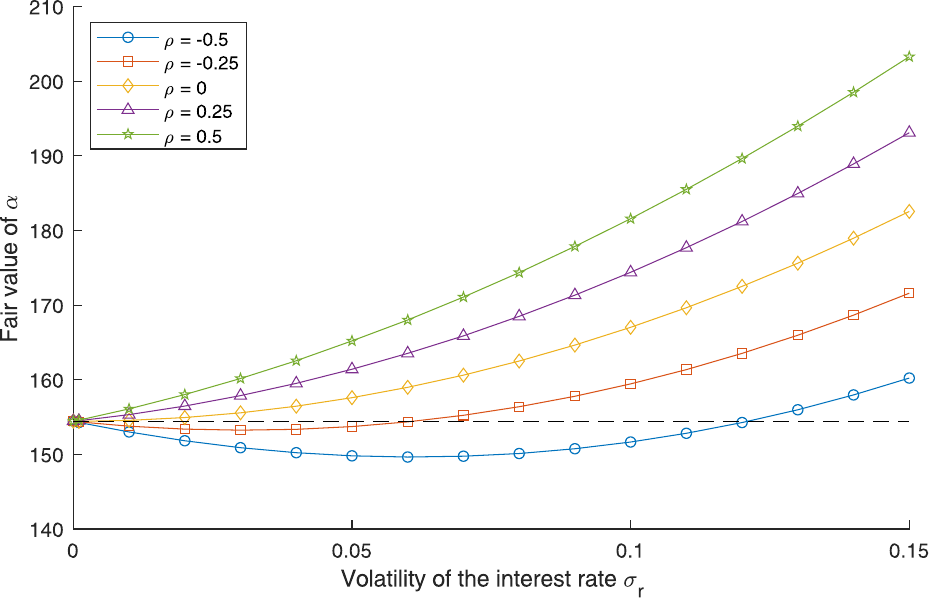}
		\end{center}
		\caption{\label{fig:2}The fair value  of $\alpha$ (in basis points), in the BS-CIR model, for a GLWB-LTC with $c=6\%$,  $X_0=60$, and different value of $\rho$. The withdrawal strategy is assumed to be the static one. The black dotted line has a constant y-value equal to the fair value of alpha in the Black-Scholes model.}
\end{figure}

\begin{figure}[h]
	\begin{center}
		\includegraphics[width=\textwidth]{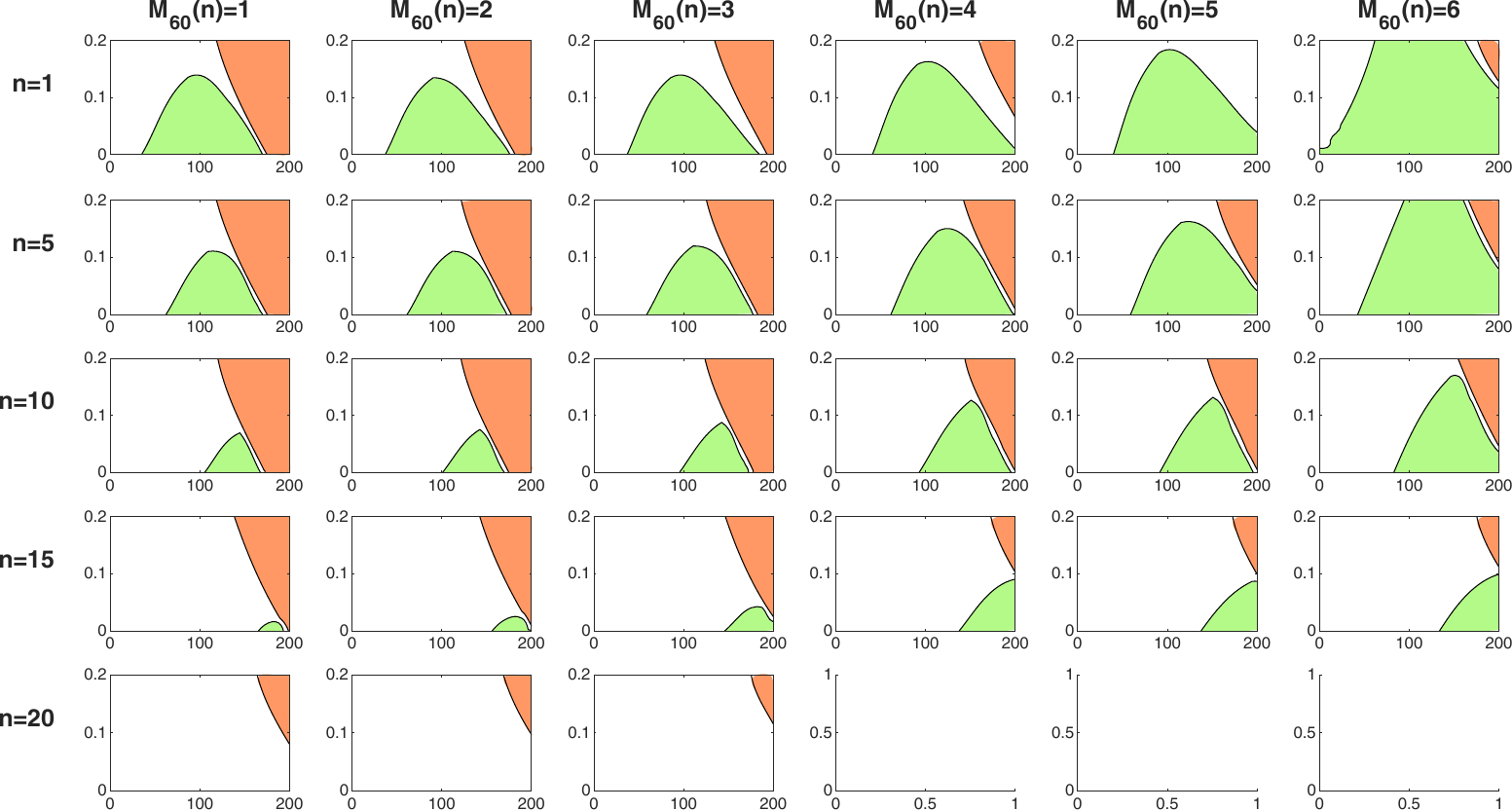}
	\end{center}
	\caption{\label{fig:1}Dynamic optimal withdrawal strategy for a GLWB-LTC contract, as a function of account value $A^{2+}_n$ (x-axis) and interest rate $r_n$ (y-axis), varying the anniversary $n$ and the health status of the insured $M_n$. The green region denotes the points for which it is convenient not to withdraw ($\gamma_{n}=0$), the white region for which it is convenient to withdraw at the guaranteed minimum rate ($\gamma_{n}=1$), and the orange region for which it is convenient to terminate the contract ($\gamma_{n}=2$).}
\end{figure}

 \FloatBarrier
	\section{Conclusions}\label{sec:conclusion}  
	Ageing and disability cannot be disentangled.
	Private insurance sector can fill the gaps of public social protection systems.
	It is important to overcome the small market penetration of LTC private insurance, by making insurance products more attractive to the demand-side.

	The state-of-the-art has proposed insurance products that bundle longevity, disability and downside risks. For instance, \cite{Hsieh2018} proposed the LCA-GLWB insurance contract, namely variable long-term care annuities, granting the policyholder to withdraw a contractually defined fraction of the benefit base until she remains alive. The key idea of our paper is to provide more general features for this insurance product and to refine its pricing method. In particular, we depart from the existing literature on variable long-term care annuities by introducing the opportunity for the policyholder to choose how much to withdraw (dynamic withdrawal strategy), including the surrender option. We name ``GLWB-LTC" the insurance product embedding LTC payouts and dynamic withdrawals. 
	
	The state-of-the-art emphasizes, in relation to GLWB variable annuity contracts, that the surrender option is generally attractive to the demand side, as policyholders may be less prone to perceive insurance securities as illiquid investments. Through our numerical results, coming from the pricing of the GLWB-LTC product, we show that the dynamic withdrawal strategy acquires even more relevance within the GLWB annuity product that offers protection from disability risks.

Our numerical analysis leads to interesting and original findings with important implications on both the social value and the attractivity of the GLWB-LTC product. These findings mainly relate to the advantages of the dynamic withdrawal strategy, also under the policyholder's perspective.  In particular, the policyholder can choose more flexibly how much to save for protection from the possible state of disabled at future times. 

\myrb{Future research paths could explore how the fund volatility impacts on the valuation of the GLWB-LTC product and methods to refine its design for better protection against this risk. This could be important since this kind of products combines insurance and investment components.}

	\section*{Conflict of interest}    
	The authors have no conflicts of interest to declare.
	\FloatBarrier 
	\bibliography{ref}
	\appendix
	
	\section{Appendix: transition intensities} \label{appendix:ti}
	Following \cite{Pritchard2006}, the intensities for the health state transitions are computed from the coefficient reported in Table \ref{TA1}. Specifically, depending on which provides
	a better fit, for $i\in \left\lbrace 1,\dots, 6 \right\rbrace$ and $j\in \left\lbrace 1,\dots, 6, 7 \right\rbrace \setminus \left\lbrace i\right\rbrace $, the transition intensities are defined as
	\begin{equation*}
		q^{x_0}_{i, j}= A_{i,j}+B_{i,j}\cdot \exp \left(C_{i,j}\left(x_0+t-68.5 \right)  \right) 
	\end{equation*}
	or 
	\begin{equation*}
		q^{x_0}_{i, j}= A_{i,j}+D_{i,j}\cdot  \left(x_0+t\right)  
	\end{equation*}
	with a lower bound of zero on all intensities at all ages. In addition for $i\in \left\lbrace 1,\dots, 6 \right\rbrace$
	\begin{equation*}
		q^{x_0}_{i, i}= -\sum_{j\neq i} q^{x_0}_{i, j}.
	\end{equation*}
	Finally, for  $j\in \left\lbrace 1,\dots, 7 \right\rbrace, q^{x_0}_{7, j}=0$.

	\begin{table}[h]
		\begin{centering}
			\begin{tabular}{lllcccc}
				\toprule 
				&  &  & \multicolumn{4}{c}{Parameters}\tabularnewline
				\cmidrule{4-7} \cmidrule{5-7} \cmidrule{6-7} \cmidrule{7-7} 
				From state & To state &  & A & B & C & D\tabularnewline
				\midrule 
				& IADL Only &  & $-3.22\cdot10^{-2}$ & $5.19\cdot10^{-2}$ & $\phantom{-}4.35\cdot10^{-2}$ & $-$\tabularnewline
				& 1-2 ADLs &  & $\phantom{-}9.58\cdot10^{-3}$ & $2.11\cdot10^{-3}$ & $\phantom{-}1.74\cdot10^{-1}$ & $-$\tabularnewline
				\multirow{2}{*}{Healty} & 3-4 ADLs &  & $-2.34\cdot10^{-2}$ & $-$ & $-$ & $\phantom{-}3.85\cdot10^{-4}$\tabularnewline
				& 5-6 ADLs &  & $-1.37\cdot10^{-4}$ & $3.16\cdot10^{-3}$ & $\phantom{-}8.01\cdot10^{-2}$ & $-$\tabularnewline
				& Inst'd &  & $-9.05\cdot10^{-4}$ & $3.15\cdot10^{-3}$ & $\phantom{-}1.32\cdot10^{-1}$ & $-$\tabularnewline
				& Dead &  & $-1.62\cdot10^{-1}$ & $-$ & $-$ & $\phantom{-}2.64\cdot10^{-3}$\tabularnewline
				\midrule 
				& Healty &  & $\phantom{-}1.04\cdot10^{-0}$ & $-$ & $-$ & $-1.13\cdot10^{-2}$\tabularnewline
				& 1-2 ADLs &  & $-3.38\cdot10^{-1}$ & $-$ & $-$ & $\phantom{-}8.32\cdot10^{-3}$\tabularnewline
				IADL & 3-4 ADLs &  & $\phantom{-}2.94\cdot10^{-2}$ & $-$ & $-$ & $-1.59\cdot10^{-4}$\tabularnewline
				Only & 5-6 ADLs &  & $-9.89\cdot10^{-2}$ & $1.33\cdot10^{-1}$ & $\phantom{-}8.16\cdot10^{-3}$ & $-$\tabularnewline
				& Inst'd &  & $-1.81\cdot10^{-1}$ & $-$ & $-$ & $\phantom{-}2.90\cdot10^{-3}$\tabularnewline
				& Dead &  & $-3.19\cdot10^{-2}$ & $8.80\cdot10^{-2}$ & $\phantom{-}1.60\cdot10^{-2}$ & $-$\tabularnewline
				\midrule 
				& Healty &  & $\phantom{-}1.74\cdot10^{-1}$ & $-$ & $-$ & $-1.45\cdot10^{-3}$\tabularnewline
				& IADL Only &  & $\phantom{-}5.45\cdot10^{-1}$ & $-$ & $-$ & $-4.71\cdot10^{-3}$\tabularnewline
				1-2 & 3-4 ADLs &  & $\phantom{-}1.85\cdot10^{-1}$ & $5.62\cdot10^{-3}$ & $\phantom{-}1.33\cdot10^{-1}$ & $-$\tabularnewline
				ADLs & 5-6 ADLs &  & $-6.10\cdot10^{-2}$ & $1.04\cdot10^{-1}$ & $-1.11\cdot10^{-2}$ & $-$\tabularnewline
				& Inst'd &  & $-5.61\cdot10^{-2}$ & $7.72\cdot10^{-2}$ & $\phantom{-}3.48\cdot10^{-2}$ & $-$\tabularnewline
				& Dead &  & $-4.68\cdot10^{-2}$ & $-$ & $-$ & $\phantom{-}1.93\cdot10^{-3}$\tabularnewline
				\midrule 
				& Healty &  & $\phantom{-}1.03\cdot10^{-1}$ & $-$ & $-$ & $-1.11\cdot10^{-3}$\tabularnewline
				& IADL Only &  & $-4.26\cdot10^{-3}$ & $2.14\cdot10^{-3}$ & $\phantom{-}1.48\cdot10^{-1}$ & $-$\tabularnewline
				3-4 & 1-2ADLs &  & $\phantom{-}1.61\cdot10^{-0}$ & $-$ & $-$ & $-1.69\cdot10^{-2}$\tabularnewline
				ADLs & 5-6 ADLs &  & $\phantom{-}1.64\cdot10^{-2}$ & $2.13\cdot10^{-1}$ & $\phantom{-}4.51\cdot10^{-2}$ & $-$\tabularnewline
				& Inst'd &  & $-9.20\cdot10^{-2}$ & $1.09\cdot10^{-1}$ & $\phantom{-}3.52\cdot10^{-2}$ & $-$\tabularnewline
				& Dead &  & $\phantom{-}1.27\cdot10^{-1}$ & $-$ & $-$ & $-5.50\cdot10^{-4}$\tabularnewline
				\midrule 
				& Healty &  & $\phantom{-}1.06\cdot10^{-1}$ & $-$ & $-$ & $-9.93\cdot10^{-4}$\tabularnewline
				& IADL Only &  & $\phantom{-}2.85\cdot10^{-1}$ & $-$ & $-$ & $-3.08\cdot10^{-3}$\tabularnewline
				5-6 & 1-2ADLs &  & $-1.81\cdot10^{-1}$ & $2.23\cdot10^{-1}$ & $\phantom{-}4.62\cdot10^{-3}$ & $-$\tabularnewline
				ADLs & 3-4 ADLs &  & $\phantom{-}1.40\cdot10^{-1}$ & $-$ & $-$ & $\phantom{-}3.16\cdot10^{-4}$\tabularnewline
				& Inst'd &  & $-2.00\cdot10^{-1}$ & $-$ & $-$ & $\phantom{-}3.80\cdot10^{-3}$\tabularnewline
				& Dead &  & $\phantom{-}1.76\cdot10^{-1}$ & $4.53\cdot10^{-2}$ & $\phantom{-}5.28\cdot10^{-2}$ & $-$\tabularnewline
				\midrule 
				& Healty &  & $\phantom{-}2.39\cdot10^{-3}$ & $2.84\cdot10^{-2}$ & $-1.19\cdot10^{-1}$ & $-$\tabularnewline
				& IADL Only &  & $\phantom{-}2.89\cdot10^{-2}$ & $-$ & $-$ & $-2.90\cdot10^{-4}$\tabularnewline
				\multirow{2}{*}{Inst'd} & 1-2ADLs &  & $-3.10\cdot10^{-2}$ & $3.89\cdot10^{-2}$ & $-1.02\cdot10^{-2}$ & $-$\tabularnewline
				& 3-4 ADLs &  & $-1.94\cdot10^{-1}$ & $2.05\cdot10^{-1}$ & $-3.68\cdot10^{-4}$ & $-$\tabularnewline
				& 5-6 ADLs &  & $\phantom{-}9.87\cdot10^{-3}$ & $-$ & $-$ & $-6.85\cdot10^{-5}$\tabularnewline
				& Dead &  & $-5.71\cdot10^{-1}$ & $-$ & $-$ & $\phantom{-}9.98\cdot10^{-3}$\tabularnewline
				\bottomrule
			\end{tabular}
			\par\end{centering}
		\caption{\label{TA1}Parameters reported in \cite{Pritchard2006} for computing transition intensities.}
		
	\end{table}
	
	
	\FloatBarrier 
	\section{Appendix: upper bound for the interest rate tree}\label{appendix:ub}
	\myrb{In this Appendix, we demonstrate the existence of a value, denoted as $R_{\max}(i)$, such that for the interest rate tree, all nodes below $R_{\max}(i)$ exclusively have successors that are smaller than $R_{\max}(i)$. To determine this value, we begin by solving the equation presented below, with respect to $\overline{k}(i)$:		
	\begin{equation}
		\label{kmax} k_d(i,\overline{k}(i))=\overline{k}(i)-1.
	\end{equation} 
	By solving equation \eqref{kmax}, one obtains 
	$$\overline{k}(i)=
	\left\lceil 
	\frac{-2 \left(\Delta t \right) ^{3/2} k_{r}  \sqrt{r_0} \sigma_{r} +\sqrt{\Delta t^2 \sigma_{r} ^2 \left(\Delta t k_r  \left(4 \theta  k_{r} -\sigma_{r} ^2\right)+\sigma_{r} ^2\right)}+\Delta t^2 i k_{r}  \sigma_{r} ^2+\Delta t \sigma_{r} ^2}{2 \Delta t^2 k_{r}  \sigma_{r} ^2}
	\right\rceil .$$
	We define $R_{\max}(i)=\mathcal{R}_{i,\overline{k}(i)}$. It is possible to prove that
	\begin{equation} \label{ineq}
			\theta < R_{\max}(i)<\overline{R}=\frac{\left(\sqrt{\Delta t ^2 \sigma_{r} ^2 \left(\Delta t k_{r}  \left(4 \theta  k_{r} -\sigma_{r} ^2\right)+\sigma_{r} ^2\right)}+4 \Delta t^2 k_{r}  \sigma_{r} ^2+\Delta t \sigma_{r} ^2\right)^2}{4 \Delta t^3 k_{r} ^2 \sigma_{r} ^2} 
	\end{equation} 
	so that $R_{\max}(i)$ is bounded by a constant $\overline{R}$ which does not depend on $i$.
	Now,  if $k\leq\overline{k}(i)$, then 
	$$\mathcal{R}_{i+1,k_d(i,k)}<\mathcal{R}_{i+1,k_u(i,k)}\leq\mathcal{R}_{i+1,k_u(i,\overline{k}(i))} =\mathcal{R}_{i+1,\overline{k}(i)}<\mathcal{R}_{i,\overline{k}(i)},$$
	as we exploited the relation $k_u(i,\overline{k}(i))=\overline{k}(i)$ which comes from equations \eqref{10}, \eqref{kmax} and \eqref{ineq}.	
Therefore,  $\mathcal{R}_{i,k}\leq \mathcal{R}_{i,\overline{k}(i)}$ implies that both $\mathcal{R}_{i+1,k_d(i)} $ and $\mathcal{R}_{i+1,k_u(i)}$ are smaller or equal to $\mathcal{R}_{i,\overline{k}(i)} $. Thus, starting from $r_0<R_{\max}(i)$, it is not possible to reach the nodes above $R_{\max}(i)$. So, we have proved  that at time $i$ the discrete process $\bar{r}$ cannot assume value grater than $\mathcal{R}_{i,\overline{k}(i)}$. Moreover, the node $\mathcal{R}_{i,\overline{k}(i)}$ may be unreachable itself if  $k_u(i-1,\overline{k}(i-1))<\overline{k}(i)$. Therefore, one can improve even more the selection of the nodes by considering only the value of $k$ smaller than $\overline{k}(i)$ and  $k_u(i-1,\overline{k}(i-1))$. Finally, we set $k_{\max}(i)=\min \left\lbrace \overline{k}(i),i,k_u(i-1,\overline{k}(i-1))\right\rbrace $.
To conclude, at time step $i$, the only nodes to be considered are those with $k$ between $k_{\min}(i) $ and $k_{\max}(i) $, with $k_{\min}(i) $ defined in Section \ref{ss:tree_for_r}. 
Figure \ref{fig:tree}  shows an example of the tree construction. The red points are nodes $\mathcal{R}_{i,k}$ that satisfy the relation $k_{\min}(i)\leq k\leq k_{\max}(i) $ and are those actually used in the Tree-LTC algorithm, while the blue points are the first order approximation of their expected values at next time step.  The green points are the nodes that satisfy the relation $k_{\max}(i)<k\leq  \overline{k}(i)$. These green nodes, although their value is less than $R_{\max}(i)$, are unreachable and can be discarded. The grey points are the nodes that satisfy the relation $\overline{k}(i)<k\leq i $. The dotted black line corresponds to the constant $\overline{R}$ which is greater than $R_{\max}(i)$ for all values of $i$.
Again, we stress out that that all the grey and green nodes $\mathcal{R}_{i,k}$   cannot be reached from the initial node $\mathcal{R}_{0,0}$, so they can be discarded and do not need to be processed, thus reducing the computational cost of the Tree-LTC algorithm.  
\begin{figure}[h]
\begin{center}
	\includegraphics[width=\textwidth]{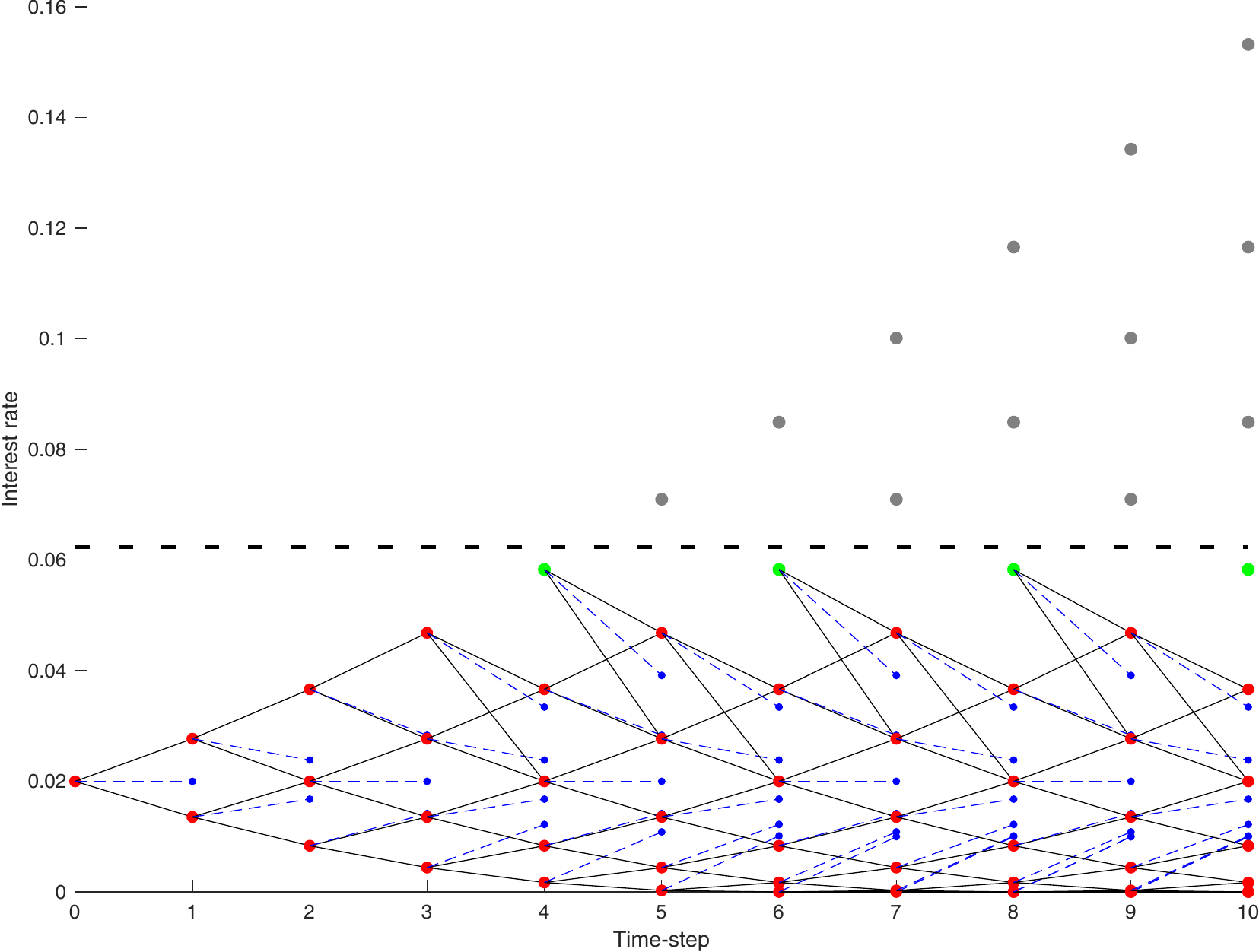}
\end{center}
\caption{\label{fig:tree}An example of construction of the interest rate tree. 
	Parameters used: $r_0=\theta=0.02, k_r=1,\sigma_r=0.05,T=5,\Delta t=0.5$.}
\end{figure}}

		\section{Appendix: analysis of the transition probabilities} \label{appendix:co}
		\myrb{In this Appendix we discuss the transition probabilities of the proposed bi-dimensional tree, employed within the Tree-LTC algorithm. In particular, our method differs from the method of \cite{Appolloni2015}   for 3 reasons: with reference to the tree for the interest rate, the nodes with zero repeated value are discarded  (we only keep one node with zero value); furthermore, when $\mathcal{R}_{i,k}\geq \theta$, the up node is defined as the down node plus 1; finally, with reference to the tree for the underlying, all nodes between two extreme values are considered and a linearity condition at the edges is exploited to estimate the value of the contract outside the considered nodes. We go through Lemma 2, Lemma 4 and Proposition 6 presented in  \cite{Appolloni2015} and we show step by step how to re-adapt the original proofs.  \\ In the following, let us consider a node $\mathcal{R}_{i,k}$ such that $k_{\min}(i) \leq k\leq  k_{\max}(i) $. 
		\setcounter{lemma}{1}
		\begin{lemma}
Let \(\theta_* < \theta\) and \(\theta^* > \theta\) be such that 
\[
0 < \theta_* < \frac{\theta \land r_0}{2} \quad \text{and} \quad \theta^* > 2(\theta \lor r_0).
\]
Then there exists a positive constant \( h_1 = h_1(\theta_*, \theta^*, k, \theta, \sigma_r)<1 \) such that for every \( \Delta t < h_1 \) the following statements hold.
\begin{enumerate}
	\item[(i)] If \( 0 \leq \mathcal{R}_{i,k} < \theta_* \sqrt{\Delta t} \) then \( k_{d}(i, k)\in\left\lbrace k,k+1 \right\rbrace  \) and \( k_{u}(i, k) \in \{k + 1, \ldots, i + 1\}\). Moreover, there exists a positive constant \( C_* > 0 \) such that
	\begin{equation}\label{lemma2_27a}
	|\mathcal{R}_{i+1,k_{d}(i,k)} - \mathcal{R}_{i,k}| \leq C_* \left( \Delta t\right) ^{3/4},
	\end{equation}
	and
	\begin{equation}\label{lemma2_27b}
	|\mathcal{R}_{i+1,k_{u}(i,k)} - \mathcal{R}_{i,k}| \leq C_*\left(  \Delta t\right) ^{3/4}.
	\end{equation}
	\item[(ii)] If \( \theta_* \sqrt{\Delta t} \leq \mathcal{R}_{i,k} < \theta^* \sqrt{\Delta t} \) then \( k_{d}(i, k) =k\) and \( k_{u}(i, k) = k + 1\). Moreover, one has
	\begin{equation}\label{lemma2_28a}
	\mathcal{R}_{i+1,k_{d}(i,k)} - \mathcal{R}_{i,k} = -\sigma \sqrt{\mathcal{R}_{i,k} \Delta t} + \frac{\sigma^2}{4} \Delta t,
	\end{equation}
	and
	\begin{equation}\label{lemma2_28b}
	\mathcal{R}_{i+1,k_{u}(i,k)} - \mathcal{R}_{i,k} = \sigma \sqrt{\mathcal{R}_{i,k} \Delta t} + \frac{\sigma^2}{4} \Delta t.
	\end{equation}
\end{enumerate}			
			\end{lemma}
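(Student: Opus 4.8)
The plan is to re-cast the whole construction on the Nelson--Ramaswamy transformed scale $\sqrt{r}$, on which the lattice is \emph{uniform}, and then to re-run, with the required modifications, the estimates of Lemma~2 in \cite{Appolloni2015}. Concretely, I would first write $\mathcal{R}_{i,k}=\bigl(\max(s_{i,k},0)\bigr)^2$ with $s_{i,k}=\sqrt{r_0}+(2k-i)\tfrac{\sigma_r}{2}\sqrt{\Delta t}$ the transformed coordinate, so that whenever the relevant nodes are positive one has the exact identities $s_{i+1,k}=s_{i,k}-\tfrac{\sigma_r}{2}\sqrt{\Delta t}$ and $s_{i+1,k+1}=s_{i,k}+\tfrac{\sigma_r}{2}\sqrt{\Delta t}$; squaring these at once yields the closed forms \eqref{lemma2_28a}--\eqref{lemma2_28b} asserted in part~(ii). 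Two observations then neutralise the modifications our tree introduces relative to \cite{Appolloni2015}: since in both regimes $\mathcal{R}_{i,k}<\theta^*\sqrt{\Delta t}$, for $\Delta t$ small enough $\mathcal{R}_{i,k}<\theta$, so $k_u=k_u^{\mathrm{ACZ}}$ and the new up-node rule is inactive; and the merged zero node enters only through the clamp $k_d=\max\{k_d^{\mathrm{ACZ}},k_{\min}(i+1)\}$, which I would track explicitly.

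Next I would record the size of the drift: $(\mu_r)_{i,k}\Delta t=k_r(\theta-\mathcal{R}_{i,k})\Delta t$ is $O(\Delta t)$ and strictly positive once $\Delta t$ is small (because $\mathcal{R}_{i,k}\to 0$ and $k_r\theta>0$), so the drift-adjusted target $T^{*}:=\mathcal{R}_{i,k}+(\mu_r)_{i,k}\Delta t$ obeys $0<T^{*}<2\theta^*\sqrt{\Delta t}$ in regime~(ii) and $0<T^{*}<2\theta_*\sqrt{\Delta t}$ in regime~(i). In regime~(ii) the node sits well away from zero, $\sqrt{\mathcal{R}_{i,k}}\ge\sqrt{\theta_*}\,(\Delta t)^{1/4}$, which dominates the step $\tfrac{\sigma_r}{2}\sqrt{\Delta t}$; hence both $(i+1,k)$ and $(i+1,k+1)$ are positive, the up and down gaps equal $\pm\sigma_r\sqrt{\mathcal{R}_{i,k}\Delta t}+\tfrac{\sigma_r^2}{4}\Delta t=\Theta\bigl((\Delta t)^{3/4}\bigr)$, and these strictly straddle the $O(\Delta t)$ perturbation $T^{*}-\mathcal{R}_{i,k}$. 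This forces $k_d^{\mathrm{ACZ}}=k$ and $k_u^{\mathrm{ACZ}}=k+1$; checking $k_{\min}(i+1)\le k$ (immediate since $\mathcal{R}_{i+1,k}>0$) removes the clamp, giving $k_d=k$, $k_u=k+1$, and with the identities of the first step the exact formulas.

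Regime~(i) is where the real work lies. For the up node I would bound the overshoot: $k_u$ is the first level $\ge k+1$ with value $\ge T^{*}$, so on the $\sqrt{r}$ scale $\sqrt{\mathcal{R}_{i+1,k_u}}\le\sqrt{T^{*}}+\tfrac{\sigma_r}{2}\sqrt{\Delta t}$, whence $\mathcal{R}_{i+1,k_u}-T^{*}\le 2\sqrt{T^{*}}\cdot\tfrac{\sigma_r}{2}\sqrt{\Delta t}+\tfrac{\sigma_r^2}{4}\Delta t=O\bigl((\Delta t)^{3/4}\bigr)$, using $\sqrt{T^{*}}=O\bigl((\Delta t)^{1/4}\bigr)$; combined with $T^{*}-\mathcal{R}_{i,k}=O(\Delta t)$ this yields \eqref{lemma2_27b}. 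For the down node, $k_d^{\mathrm{ACZ}}\le k$ always, and since $\mathcal{R}_{i+1,k}\le\mathcal{R}_{i,k}<T^{*}$ one gets $k_d^{\mathrm{ACZ}}=k$, so $k_d=\max\{k,k_{\min}(i+1)\}\in\{k,k+1\}$ (using $k_{\min}(i+1)\le k_{\min}(i)+1\le k+1$). When $k_d=k$ the displacement is either the exact down-gap, bounded by $\sigma_r\sqrt{\theta_*}\,(\Delta t)^{3/4}+\tfrac{\sigma_r^2}{4}\Delta t$, which gives \eqref{lemma2_27a}, or identically zero if $\mathcal{R}_{i,k}=0$.

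The crux — the only genuinely new point beyond re-reading \cite{Appolloni2015} — is the remaining case $k_d=k+1$, forced by the merged zero node, i.e.\ $k_{\min}(i+1)=k+1$. Here I would show this clamp can fire only at a zero node: writing $A=\tfrac{1}{\sigma_r}\sqrt{r_0/\Delta t}$, the condition $\underline{k}(i+1)=k+1$ gives $\tfrac{i+1}{2}-A\in[k+1,k+2)$, hence $2A+2k-i\le-1$ and therefore $s_{i,k}=\tfrac{\sigma_r}{2}\sqrt{\Delta t}\,(2A+2k-i)<0$, i.e.\ $\mathcal{R}_{i,k}=0$; the very same inequality makes level $k+1$ of row $i+1$ the retained zero node, so $\mathcal{R}_{i+1,k+1}=0$ and the displacement is again $0$. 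Collecting all estimates into one constant $C_*=C_*(\theta_*,\theta^*,\theta,\sigma_r,k_r)$ and fixing $h_1<1$ small enough to validate every ``for $\Delta t$ small'' step (positivity of the drift, $\theta^*\sqrt{\Delta t}<\theta$, and $(\Delta t)^{1/4}$ dominating $\sqrt{\Delta t}$) completes the argument. I expect the main obstacle to be exactly this interaction between the drift-selection of $k_d$ and the zero-node merging: one must rule out a spurious $O(\sqrt{\Delta t})$ jump, and the key is that whenever merging relocates $k_d$ it does so between two nodes both carrying value zero.
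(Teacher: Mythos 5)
Your proposal is correct and follows essentially the same route as the paper's own proof: re-run the Lemma~2 estimates of \cite{Appolloni2015}, note that the modified up-node rule is inactive in both regimes (since $\mathcal{R}_{i,k}<\theta^{*}\sqrt{\Delta t}<\theta$ for small $\Delta t$), and reduce everything to the single new issue of the clamp $k_d(i,k)=\max\left\lbrace k_d^{\mathrm{ACZ}}(i,k),k_{\mathrm{min}}(i+1)\right\rbrace$, which you show can fire only when both $\mathcal{R}_{i,k}$ and the relocated node $\mathcal{R}_{i+1,k+1}$ equal zero, so the displacement in part (i) is identically zero. This is precisely the paper's key observation, reached from the opposite direction (the paper case-splits on $\mathcal{R}_{i,k}=0$ versus $\mathcal{R}_{i,k}>0$ rather than deducing $\mathcal{R}_{i,k}=0$ from the clamp firing), and your use of the step $\tfrac{\sigma_r}{2}\sqrt{\Delta t}$ on the $\sqrt{r}$-scale is the convention that actually makes the closed forms in part (ii) come out as stated.
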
}
		
		\myrb{
		\begin{proof}
			The proof of (i) is very similar to the original one. We star by pointing out that in this particular case, it has been proven that $k_d^{\mathrm{ACZ}}(i,k)=k$, so, with respect to our tree,  $k_d(i,k)=\max\left\lbrace k, k_{\mathrm{min}}(i+1) \right\rbrace$.
			\begin{itemize}
				\item If $\mathcal{R}_{i,k}>0$, then $\mathcal{R}_{i+1,k+1}>\mathcal{R}_{i,k}>0$ and $ k_{\mathrm{min}}(i+1)\leq k$. Then $k_d(i,k)=\max\left\lbrace k, k_{\mathrm{min}}(i+1,k) \right\rbrace =k_d^{\mathrm{ACZ}}(i,k)$  and the original proof about is not altered.
				\item If $\mathcal{R}_{i,k}=0$, then $k= k_{\mathrm{min}}(i,k)$ and $R_{i+1,k}\leq \mathcal{R}_{i,k}=0$, so $ k_{\mathrm{min}}(i+1,k)\geq k$. Moreover, since $k\geq k_{\mathrm{min}}(i,k)$, then $\mathcal{R}_{i,k+1}>0$ and $ k_{\mathrm{min}}(i+1,k)\leq k+1$, so $ k_{\mathrm{min}}(i+1,k)=k$ or $ k_{\mathrm{min}}(i+1,k)=k+1$.
				\begin{itemize}
					\item If $ k_{\mathrm{min}}(i+1,k)=k$, then $k_d(i,k)=\max\left\lbrace k,k \right\rbrace =k_d^{\mathrm{ACZ}}(i,k)$, and one continues as in the original proof.  
					\item If $ k_{\mathrm{min}}(i+1,k)=k+1$, then $k_d(i,k)=\max\left\lbrace k,k+1 \right\rbrace =k+1$ and $\mathcal{R}_{i+1,k_d^{\mathrm{ACZ}}(i,k)}=0$, so $|\mathcal{R}_{i+1,k_{d}(i,k)} - \mathcal{R}_{i,k}| =0\leq C_* \left(\Delta t \right) ^{3/4}$ and nothing changes for $k_u(i,k)$.
					\end{itemize}
			\end{itemize}
			The proof of (ii) is direct. In fact, in this case $\mathcal{R}_{i,k}>0$, so, as discussed before, $k_d(i,k)=k_d^{\mathrm{ACZ}}(i,k)$, and, by definition, $k_u(i,k)=k_d^{\mathrm{ACZ}}(i,k)+1=k+1$. The proof of equations \eqref{lemma2_28a} and \eqref{lemma2_28b} follows as in the original reasoning.
		\end{proof}		 
			\setcounter{lemma}{3}
	\begin{lemma}
		Let $r_*$ be fixed. Then there exists $h_2 = h_2(r_*, \sigma_{F}) < 1$ such that for every $\Delta t  < h_2$ and $(i, k)$ such that $\mathcal{R}_{i,k} \in [0, r_*]$ one has
		\begin{equation}
			j_d(i, j, k) = j-1 \quad \text{and} \quad j_u(i, j, k) = j + 1
		\end{equation}
	for all $j=2,\dots,j_{\max}-1$.
		As a consequence, for $\Delta t  < h_2$ and for every $(i, k)$ such that $r_{i,k} \in [0, r_*]$ one has
		\begin{equation}
			\mathcal{A}_{j_u(i,j,k)} - \mathcal{A}_{j} = \mathcal{A}_{j}\left(e^{\sigma_{F}\sqrt{\Delta t }} - 1\right) \quad \text{and} \quad \mathcal{A}_{j_d(i,j,k)} - \mathcal{A}_{j} = \mathcal{A}_{j}\left(e^{-\sigma_{F}\sqrt{\Delta t }} - 1\right).
		\end{equation}
		\end{lemma}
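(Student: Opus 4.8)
The plan is to follow the re-adaptation strategy announced at the start of this appendix, mirroring the proof of the analogous Lemma~4 in \cite{Appolloni2015}, but exploiting the fact that our grid $\mathcal{G}_A$ is a \emph{fixed} geometric mesh whose consecutive positive nodes satisfy the constant ratio $\mathcal{A}_{j+1}/\mathcal{A}_{j}=e^{\sigma_{F}\sqrt{\Delta t}}$ for every $j\geq 1$. The entire argument reduces to locating the first-order conditional mean $\mathcal{A}_{j}\left(1+\mathcal{R}_{i,k}\Delta t\right)$ relative to the two neighbouring nodes $\mathcal{A}_{j-1}$ and $\mathcal{A}_{j+1}$, so I would split the proof into three short steps: the down-movement, the up-movement, and the two final identities.

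For the down-movement I would simply observe that, since $\mathcal{R}_{i,k}\geq 0$, one has $\mathcal{A}_{j}\left(1+\mathcal{R}_{i,k}\Delta t\right)\geq \mathcal{A}_{j}>\mathcal{A}_{j-1}$, so that the index $j-1$ already satisfies the defining inequality for $j_d(i,j,k)$. As $j-1$ is the largest index strictly below $j$ and lies in the candidate set, it is automatically the maximum of that set, whence $j_d(i,j,k)=j-1$. I would stress that this part requires no smallness of $\Delta t$: it holds for every admissible step and every $j\geq 2$ (so that $j-1\geq 1$).

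The only genuine work is the up-movement. Here I would use the hypothesis $\mathcal{R}_{i,k}\leq r_*$ to bound $\mathcal{A}_{j}\left(1+\mathcal{R}_{i,k}\Delta t\right)\leq \mathcal{A}_{j}\left(1+r_*\Delta t\right)$, and then establish the scalar inequality $1+r_*\Delta t\leq e^{\sigma_{F}\sqrt{\Delta t}}$ for all sufficiently small $\Delta t$. This is precisely where $h_2$ is produced: since convexity of the exponential gives $e^{\sigma_{F}\sqrt{\Delta t}}-1\geq \sigma_{F}\sqrt{\Delta t}$, it is enough to impose $r_*\Delta t\leq \sigma_{F}\sqrt{\Delta t}$, i.e. $\Delta t\leq (\sigma_{F}/r_*)^2$; setting $h_2=\min\{\tfrac12,(\sigma_{F}/r_*)^2\}$ guarantees both $h_2<1$ and the desired inequality. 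For $\Delta t<h_2$ we then obtain $\mathcal{A}_{j}\left(1+\mathcal{R}_{i,k}\Delta t\right)\leq \mathcal{A}_{j}\,e^{\sigma_{F}\sqrt{\Delta t}}=\mathcal{A}_{j+1}$, so $j+1$ satisfies the defining inequality for $j_u(i,j,k)$; being the smallest index strictly above $j$, and with $j+1\leq j_{\max}$ since $j\leq j_{\max}-1$, it is the minimum of the candidate set, whence $j_u(i,j,k)=j+1$.

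The consequence is then immediate. Substituting $j_u(i,j,k)=j+1$ and $j_d(i,j,k)=j-1$ into the node definition $\mathcal{A}_{m}=P\exp\!\left((m-j_{\min})\sigma_{F}\sqrt{\Delta t}\right)$ yields at once $\mathcal{A}_{j+1}-\mathcal{A}_{j}=\mathcal{A}_{j}\left(e^{\sigma_{F}\sqrt{\Delta t}}-1\right)$ and $\mathcal{A}_{j-1}-\mathcal{A}_{j}=\mathcal{A}_{j}\left(e^{-\sigma_{F}\sqrt{\Delta t}}-1\right)$, which are exactly the claimed identities. The main (and essentially the only) obstacle is the elementary transcendental inequality $1+r_*\Delta t\leq e^{\sigma_{F}\sqrt{\Delta t}}$, where the $\sqrt{\Delta t}$ term on the right must be seen to dominate the linear term $r_*\Delta t$ on the left for small steps; everything else follows from the geometric spacing of $\mathcal{G}_A$ and the non-negativity of the discretised rate.
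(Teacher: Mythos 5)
Your proof is correct and follows essentially the same route as the paper's: the down-movement is immediate from $\mathcal{R}_{i,k}\geq 0$ and the geometric spacing of the grid, while the up-movement reduces to the scalar inequality $\mathcal{R}_{i,k}\Delta t \leq e^{\sigma_{F}\sqrt{\Delta t}}-1$, settled via $e^{\sigma_{F}\sqrt{\Delta t}}-1\geq \sigma_{F}\sqrt{\Delta t}$. Your write-up is in fact slightly tidier, since you bound $\mathcal{R}_{i,k}$ by $r_*$ to produce an explicit uniform threshold $h_2=\min\left\{\tfrac{1}{2},\left(\sigma_{F}/r_*\right)^2\right\}$, whereas the paper states the smallness condition node by node as $\Delta t<\left(\sigma_{F}/\mathcal{R}_{i,k}\right)^2$ (and contains an inessential sign typo in the displayed inequality it sets out to prove).
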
 
	\begin{proof}
		First of all, we observe that $ \mathcal{A}_{j-1}=\mathcal{A}_{j}e^{-\sigma_{F}\sqrt{\Delta t}}$  and $ \mathcal{A}_{j+1}=\mathcal{A}_{j}e^{\sigma_{F}\sqrt{\Delta t}}$.
			Moreover, for $j^*<j$ one has $\mathcal{A}_{j^*}\leq\mathcal{A}_{j}<\mathcal{A}_{j}\left(1+\mathcal{R}_{i,k}\Delta t \right) $, so $j_d(i, j, k) =j-1$. Concerning the up movement, one has to prove that for $\Delta t$ sufficiently small one has  
		$$\mathcal{A}_{j}\left(1+\mathcal{R}_{i,k}\Delta t \right) \geq \mathcal{A}_{j}e^{\sigma_{F}\sqrt{\Delta t}}$$ or equivalently 	$$  \mathcal{R}_{i,k}\Delta t \geq  e^{\sigma_{F}\sqrt{\Delta t}}-1 .$$
		This is true since $$\left( e^{\sigma_{F}\sqrt{\Delta t}}-1\right)-\mathcal{R}_{i,k}\Delta t  \geq \sigma_{F}\sqrt{\Delta t}-\mathcal{R}_{i,k}\Delta t =\sqrt{\Delta t}\left( \sigma_{F}-\mathcal{R}_{i,k}\sqrt{\Delta t}\right). $$
		Finally, the last term of the previous equality is positive for $\Delta t<\left(\sigma_{F}/\mathcal{R}_{i,k} \right) ^2$.
	\end{proof}
\setcounter{proposition}{5}
\begin{proposition}
	\begin{enumerate}
		Let $r_* > 0$ and $A_* > 0$ be fixed and set $\mathcal{I}_* = \{(i,j,k) : 1<j<j_{\max},\mathcal{R}_{i,k} \leq r_*, \mathcal{A}_{j} \leq A_*\}$. Let $\theta_*$ be as in Lemma 2 and $(i,j,k) \in \mathcal{I}_*$. We set:
		\item[(i)] if $(i,j,k) \in \mathcal{I}_*$ and $r_{i,k} < \theta_* \sqrt{\Delta t}$ then
		\begin{align*}
			p_{u,u} &= p^{A}_{i,j,k}p^{R}_{i,k}, \\
			p_{u,d} &= p^{A}_{i,j,k}\left( 1-p^{R}_{i,k}\right) , \\
			q_{d,u} &=p^{R}_{i,k}\left( 1-p^{A}_{i,j,k}\right), \\
			q_{d,d} &= \left( 1-p^{A}_{i,j,k}\right) \left( 1-p^{R}_{i,k}\right);
		\end{align*}
		\item[(ii)] if $(i,j,k) \in \mathcal{I}_*$ and $r_{i,k} \geq \theta_* \sqrt{\Delta t}$ then the four transition probabilities are set as the solutions of linear system  \eqref{sistema}.
	\end{enumerate}
	Then there exists $h_3 < 1$ and a positive constant $C$ such that for every $\Delta t < h_3$ and $(i,j,k) \in \mathcal{I}_*$ the above probabilities are actually well defined.
\end{proposition}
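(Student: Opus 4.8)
The plan is to handle the two regimes of the statement separately, the low-rate regime (i) being essentially free and the matched regime (ii) carrying all the work. In regime (i), where $\mathcal{R}_{i,k}<\theta_*\sqrt{\Delta t}$, the four numbers are prescribed as the products $p^{A}_{i,j,k}p^{R}_{i,k}$, $p^{A}_{i,j,k}(1-p^{R}_{i,k})$, and so on, of the marginal up/down probabilities. Since both $p^{A}_{i,j,k}$ and $p^{R}_{i,k}$ lie in $[0,1]$ by their very definition (each is produced by a $\max$--$\min$ truncation), every product is non-negative; a one-line factorisation shows that they sum to $1$ and reproduce the marginals $p^{A}_{d}$ and $p^{R}_{d}$. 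Thus well-definedness in regime (i) is immediate and needs no smallness of $\Delta t$.

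For regime (ii) I would first reduce the $4\times 4$ system \eqref{sistema} to a scalar problem. Using the three linear (marginal) constraints to parametrise the unknowns by $x:=p_{u,u}$ gives $p_{d,u}=p^{R}_{u}-x$, $p_{u,d}=p^{A}_{u}-x$ and $p_{d,d}=1-p^{A}_{u}-p^{R}_{u}+x$, so that the four numbers are valid probabilities precisely when the Fréchet--Hoeffding bounds
\[
\max\left(0,\;p^{A}_{u}+p^{R}_{u}-1\right)\le x\le \min\left(p^{A}_{u},p^{R}_{u}\right)
\]
hold. Substituting this parametrisation into the covariance equation of \eqref{sistema} and collecting the coefficient of $x$, which equals $(\delta A_{u}-\delta A_{d})(\delta R_{u}-\delta R_{d})>0$ with $\delta A_{u/d}=\mathcal{A}_{j_{u/d}}-\mathcal{A}_{j}$ and $\delta R_{u/d}=\mathcal{R}_{i+1,k_{u/d}}-\mathcal{R}_{i,k}$, I obtain the explicit solution
\[
x=p^{A}_{u}p^{R}_{u}+\frac{\rho\,\sigma_{r}\sigma_{F}\sqrt{\mathcal{R}_{i,k}}\,\mathcal{A}_{j}\Delta t-E[\delta A]\,E[\delta R]}{(\delta A_{u}-\delta A_{d})(\delta R_{u}-\delta R_{d})},
\]
where $p^{A}_{u}p^{R}_{u}$ is the value corresponding to the independence cross-moment $E[\delta A]E[\delta R]$. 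Positivity of the coefficient of $x$ simultaneously re-establishes the uniqueness asserted after \eqref{sistema}.

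The heart of the proof is then a uniform asymptotic estimate showing that $x$ lands strictly inside the Fréchet interval. Invoking Lemma 4 I write $\delta A_{u}-\delta A_{d}=2\mathcal{A}_{j}\sinh(\sigma_{F}\sqrt{\Delta t})$, and invoking Lemma 2(ii), equations \eqref{lemma2_28a}--\eqref{lemma2_28b}, I write $\delta R_{u}-\delta R_{d}=2\sigma_{r}\sqrt{\mathcal{R}_{i,k}\Delta t}$, so the denominator equals $4\mathcal{A}_{j}\sigma_{F}\sigma_{r}\sqrt{\mathcal{R}_{i,k}}\,\Delta t\,(1+O(\Delta t))$. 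The first-moment matching built into $p^{A}$ and $p^{R}$ (whose truncations are inactive here for small $\Delta t$) gives $E[\delta A]=\mathcal{A}_{j}\mathcal{R}_{i,k}\Delta t$ and $E[\delta R]=k_{r}(\theta-\mathcal{R}_{i,k})\Delta t$, so $E[\delta A]E[\delta R]=O(\mathcal{R}_{i,k}\Delta t^{2})$. The decisive cancellation is that the factor $\sqrt{\mathcal{R}_{i,k}}$ in the target covariance matches the one in the denominator, yielding
\[
x-p^{A}_{u}p^{R}_{u}=\frac{\rho+O(\Delta t)}{4\left(1+O(\Delta t)\right)}\longrightarrow\frac{\rho}{4}
\]
uniformly over $(i,j,k)\in\mathcal{I}_{*}$. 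Combined with $p^{A}_{u}=\tfrac12+O(\sqrt{\Delta t})$ and $p^{R}_{u}=\tfrac12+O(\Delta t^{1/4})$ (the latter using $\mathcal{R}_{i,k}\ge\theta_*\sqrt{\Delta t}$), the four probabilities tend uniformly to $\tfrac{1+\rho}{4},\tfrac{1-\rho}{4},\tfrac{1-\rho}{4},\tfrac{1+\rho}{4}$, all strictly positive since $|\rho|<1$. Taking $C=\tfrac{1-|\rho|}{8}$ and $h_{3}\le\min(h_{1},h_{2})$ small enough that the $o(1)$ corrections stay below $C$ then delivers the claim.

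I expect the main obstacle to be the uniformity of these estimates down to the boundary $\mathcal{R}_{i,k}\approx\theta_*\sqrt{\Delta t}$, where the rate spread $\delta R_{u}-\delta R_{d}$ and the weight $\sqrt{\mathcal{R}_{i,k}}$ both degenerate like $\Delta t^{1/4}$. One must verify that these degeneracies cancel \emph{exactly} between the covariance target, the denominator and the drift correction $E[\delta R]$, rather than producing an unbounded perturbation of $x$; in particular the non-trivial point is that $p^{R}_{u}$ stays within $O(\Delta t^{1/4})$ of $\tfrac12$ uniformly on $\mathcal{I}_{*}$. Once the powers of $\Delta t$ and of $\mathcal{R}_{i,k}$ are tracked carefully in each term, the reduction to the Fréchet interval makes the strict positivity of all four probabilities, and hence their well-definedness, transparent.
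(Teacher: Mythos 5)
Your proof is correct and follows essentially the same route as the paper: case (i) is the trivial independence product, and case (ii) is settled by solving the linear system \eqref{sistema} explicitly and using Lemmas 2 and 4 to show the four solutions converge, uniformly on $\mathcal{I}_*$, to the strictly positive limits $(1\pm\rho)/4$, hence are non-negative for $\Delta t$ small. The paper's own argument is only a sketch deferring the computation to \cite{Appolloni2015}; your Fr\'echet--Hoeffding parametrisation and the explicit tracking of the $\sqrt{\mathcal{R}_{i,k}}$ cancellation near the boundary $\mathcal{R}_{i,k}\approx\theta_*\sqrt{\Delta t}$ supply precisely the details that sketch omits.
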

The proof of this result is based on the properties demonstrated in the previous Lemmas 2 and 4. Below there are some details.
Case (i) is handled by an approximation: when $\mathcal{R}_{i,k}$ is close to zero, no correlation is assumed, which is exactly the case for $\mathcal{R}_{i,k}=0$. Case (ii) is solved by studying the linear system \eqref{sistema} directly: explicit formulae for the solutions can be determined and it is shown that for sufficiently small $\Delta t$, the four probabilities obtained are all non-negative.}

\medskip
\myrb{To conclude, we observe that, thanks to the results of Lemmas 2 and 4, and Proposition 6, \cite{Appolloni2015} prove the weak convergence of the discrete process $\bar{A},\bar{r}$ to the corresponding continuous processes.}


\end{document}